\let\defaultemptyset=\emptyset
\renewcommand{\emptyset}{\defaultemptyset}
\newlist{myaxioms}{enumerate}{10}
\setlist[myaxioms,1]{label=(P\arabic*)}
\setlist[myaxioms]{resume}
\newlist{enumeratei}{enumerate}{10}
\setlist[enumeratei]{label=\emph{(\roman*)}}
\DeclareMathOperator{\ob}{ob}
\DeclareMathOperator{\domv}{dom_{\mathit{v}}}
\DeclareMathOperator{\codv}{cod_{\mathit{v}}}
\DeclareMathOperator{\domh}{dom_{\mathit{h}}}
\DeclareMathOperator{\codh}{cod_{\mathit{h}}}
\DeclareMathOperator{\im}{Im}
\newcommand{\restr}[2]{#1_{|#2}}
\newcommand{\two}{\mathsf{2}}
\newcommand{\bigcat}[1]{\ensuremath{\mathsf{#1}}}
\newcommand{\maji}[1]{\ensuremath{\mathbb{#1}}}
\newcommand{\commentthis}[1]{}
\newcommand{\Psh}[1]{\widehat{#1}}
\renewcommand{\P}{\maji{P}}
\newcommand{\Cospan}[1]{\bigcat{Cospan} (#1)}
\newcommand{\into}{\hookrightarrow}
\newcommand{\otni}{\hookleftarrow}
\newcommand{\ot}{\leftarrow}
\newcommand{\xto}[1]{\xrightarrow{#1}}
\newcommand{\xot}[1]{\xleftarrow{#1}}
\newcommand{\xTo}[1]{\xRightarrow{#1}}
\newcommand{\xinto}[1]{\xhookrightarrow{#1}}
\newcommand{\labelof}[1]{\ell^{#1}}
\newcommand{\labelpi}{\labelof{\picalc}}
\newcommand{\faireqof}[2]{\mathrel{\sim^{#1}_{#2}}}
\renewcommand{\hat}[1]{\widehat{#1}}
\def\framed{%
\setbox0=\vbox\bgroup%
\advance\hsize by -2\fboxsep\advance\hsize by -2\fboxrule%
\linewidth=\hsize%
}
\def\endframed{%
\egroup\noindent\framebox[\textwidth]{\box0}\vspace*{1mm}}
\tikzset{todim/.style = {decoration={markings, mark=at position .5 with %
      {\draw (-1pt,-1pt) rectangle (1pt,1pt);}},postaction={decorate}}}
\definecolor{dkgreen}{rgb}{0,0.2,0}
\newcommand{\A}{\maji{A}}
\newcommand{\B}{\maji{B}}
\newcommand{\Beh}[1]{\Behfun_{#1}}
\newcommand{\Behfun}{\bigcat{B}}
\newcommand{\C}{\maji{C}}
\newcommand{\tick}{\daimon}
\newcommand{\Chat}{\hat{\maji{C}}}
\newcommand{\Chatf}{\Pshf{\maji{C}}}
\newcommand{\D}{\maji{D}}
\renewcommand{\DH}{\D_{H}}
\newcommand{\Dh}{\D_h}
\newcommand{\Dv}{\D_v}
\newcommand{\ccs}{\mathit{CCS}}
\newcommand{\picalc}{\mathit{Pi}}
\newcommand{\Dccs}{\maji{D}^{\scriptscriptstyle \ccs}}
\newcommand{\E}{\maji{E}}
\newcommand{\HH}{\bigcat{H}}
\newcommand{\LL}{\bigcat{L}}
\newcommand{\LLL}{\mathcal{L}}
\newcommand{\MMMB}{\griso{\B}}
\newcommand{\V}{\maji{V}}
\newcommand{\W}{\maji{W}}
\newcommand{\yoneda}{\bigcat{y}}
\newcommand{\RR}{\bigcat{R}}
\renewcommand{\SS}{\bigcat{S}}
\newcommand{\SSS}{\mathcal{S}}
\newcommand{\TT}{\bigcat{T}} %
\newcommand{\Set}{\bigcat{Set}}
\newcommand{\set}{\bigcat{set}}
\newcommand{\ford}{\bigcat{ford}}
\newcommand{\OPsh}[1]{\wideparen{#1}}
\newcommand{\FPsh}[1]{\overline{#1}}
\newcommand{\Pshf}[1]{\widehat{#1}^{{}_f}}
\newcommand{\Gph}{\bigcat{Gph}}
\newcommand{\Nat}{\mathbb{N}} 
\newcommand{\para}{\mathbin{\mid}}
\newcommand{\transl}[1]{\llbracket #1 \rrbracket}
\newcommand{\translfun}{\llbracket - \rrbracket}
\newcommand{\rond}{\circ}
\newcommand{\vrond}{\mathbin{\bullet}}
\newcommand{\id}{\mathit{id}}
\newcommand{\iso}{\cong}
\newcommand{\ens}[1]{\{ #1 \}}
\newcommand{\aalt}{\mathrel{|}}
\newcommand{\op}[1]{#1^{\mathit{op}}}
\newcommand{\send}[2]{\ensuremath{\bar{#1}\langle#2\rangle}}
\newcommand{\sendab}{\send{a}{b}}
\newcommand{\Gam}{\Gamma}
\newcommand{\Del}{\Delta}
\newcommand{\labelD}{\ell_\D}
\newcommand{\bureaucratic}[1]{}
\newcommand{\paraof}[1]{\pi_{#1}}
\newcommand{\paralof}[1]{\pi^l_{#1}}
\newcommand{\pararof}[1]{\pi^r_{#1}}
\newcommand{\paralp}{\paralof{p}}
\newcommand{\pararp}{\pararof{p}}
\newcommand{\paraofij}[2]{\mathbin{{}_{#1}\!\para\!{}_{#2}}}
\newcommand{\nuof}[1]{\nu_{#1}}
\newcommand{\nun}{\nuof{n}}
\newcommand{\nup}{\nuof{p}}
\newcommand{\tickof}[1]{\tick_{#1}}
\newcommand{\tickn}{\tickof{n}}
\newcommand{\tickp}{\tickof{p}}
\newcommand{\taunamcd}{\tau_{n,a,m,c,d}}
\newcommand{\lts}{\textsc{lts}}
\newcommand{\anlts}{an \lts{}}
\newcommand{\ltss}{\lts{s}}
\newcommand{\Lts}{\textsc{Lts}}
\newcommand{\Ltss}{\Lts{s}}
\newcommand{\ortho}{\mathrel{\bot}}
\newcommand{\botright}[1]{#1^{\bot}}
\newcommand{\botleft}[1]{{{}^{\bot}#1}}
\newcommand{\botrightleft}[1]{\botleft{(\botright{#1})}}
\newcommand{\daimon}{\heartsuit}
\newcommand{\deriv}{\partial}
\renewcommand{\with}[1]{\langle #1 \rangle}
\newcommand{\extafun}[1]{\overline{(-)}}
\newcommand{\Sierp}{\Sigma}
\tikzset{history/.style = {-open triangle 45}}
\newcommand{\vdashdefinite}{\vdash_{\mathsf{D}}}
\newcommand{\outnab}{o_{n,a,b}}
\newcommand{\outmcd}{o_{m,c,d}}
\newcommand{\inna}{\iota_{n,a}}
\newcommand{\paradr}{\shortdfibto}
\renewcommand{\MMMB}{\mathbb{B}}
\begin{document}
\title{Fully-abstract concurrent games for $\pi$}
\titlerunning{Concurrent games for $\pi$} %
\author{Clovis Eberhart\inst{1} 
\and Tom Hirschowitz\inst{2}\thanks{Partially funded by the French ANR projets blancs `Formal 
  Verification of Distributed Components' PiCoq ANR 2010 BLAN 0305 01 
  and `Realizability for classical logic, concurrency, references and 
  rewriting' Récré ANR-11-BS02-0010.}%
\and 
Thomas Seiller\inst{3}${}^\star$
} %
\institute{${}^{\mathrm{1}}$ ENS Cachan \ \  ${}^{\mathrm{2}}$ CNRS and Université de Savoie \ \ 
${}^{\mathrm{3}}$ INRIA}

\maketitle

\begin{abstract}
  We define a semantics for Milner's pi-calculus, with three main
  novelties. First, it provides a fully-abstract model for fair
  testing equivalence, whereas previous semantics covered variants of
  bisimilarity and the may and must testing equivalences. Second, it
  is based on reduction semantics, whereas previous semantics were
  based on labelled transition systems.  Finally, it has a strong game
  semantical flavor in the sense of Hyland-Ong and Nickau.  Indeed,
  our model may both be viewed as an innocent presheaf semantics and
  as a concurrent game semantics.
\end{abstract}

\section{Introduction}
The $\pi$-calculus~\cite{Engberg86acalculus,Milner:pi} was designed as
a basic model to reason about concurrent programs, as the
$\lambda$-calculus for functional programs. Its behavioural theory
features several notions of equivalence, including variants of
bisimilarity, contextually-defined congruences, and testing
equivalences~\cite{DBLP:journals/tcs/NicolaH84}.
Its denotational semantics has been thoroughly
investigated~\cite{DBLP:conf/lics/FioreMS96,DBLP:conf/lics/Stark96,DBLP:conf/lics/FioreT01,DBLP:conf/lics/FioreS06,Popescu09,DBLP:conf/ctcs/CattaniSW97,DBLP:journals/tcs/Hennessy02,DBLP:conf/fossacs/CrafaVY12,DBLP:journals/entcs/MontanariP95}.
This paper introduces a new denotational semantics for $\pi$ with
three main novelties.

\paragraph{{\bf Fair testing equivalence}}
First, our semantics provides a fully-abstract model for \emph{fair}
testing
equivalence~\cite{DBLP:conf/icalp/NatarajanC95,DBLP:conf/concur/BrinksmaRV95}, 
whereas previous work covers variants of bisimilarity, and the \emph{may}
and \emph{must} testing equivalences.

Originally introduced for CCS-like calculi, fair testing equivalence
reconciles the good properties of observation
congruence~\cite{Milner89} w.r.t.\ divergence, and the good properties
of previous testing equivalences~\cite{DBLP:journals/tcs/NicolaH84}
w.r.t.\ choice.  The idea is, as in most testing semantics, that two
processes are equivalent when they pass the same \emph{tests}. A
process $P$ passes the test $T$ iff their parallel composition
$P \para T$ never loses the ability of playing some special `tick'
action, even after any reduction sequence.  Fair testing is, to our
knowledge, one of the finest testing equivalences.

Cacciagrano et al.~\cite{DBLP:journals/corr/abs-0904-2340}, beyond
providing an excellent survey on fairness, adapt the definition to
$\pi$ and study approximations of it.  Their definition is not a
congruence, for essentially the same reason as for standard
bisimilarity~\cite{DBLP:journals/acta/Sangiorgi96}. We thus refine it
by allowing tests to rename channels, which yields a congruence.

\paragraph{{\bf Reductions vs.\ labelled transitions}}
A second novelty is that our model is based on the \emph{reduction}
semantics of $\pi$, while all others, to our knowledge, are based on
its standard \emph{labelled transition system} (\lts{}). The tension
between the two has been the subject of substantial
research~\cite{modularLTS}. Briefly, reduction semantics is simple and
intuitive, but it operates on equivalence classes of terms (under so-called
\emph{structural} congruence). On the other hand, designing \ltss{} is
a subtle task, rewarded by easier, more structural reasoning over
reductions.  
 \Ltss{} are generally perceived as less primitive than
reduction semantics, although they are often preferred for practical
reasons.

Most \ltss{} for $\pi$ distinguish two kinds of transitions for
output, respectively called \emph{free} and \emph{bound}.  This
distinction is crucial in all models mentioned above, which rely on
the standard \lts{} semantics, but not in our model. Actually, as
explained below in the proof sketch for Theorem~\ref{thm:1}, our model
suggests a new \lts{} for $\pi$ which does not distinguish between
free and bound output, in a way reminiscent of the carefully-crafted
\lts{} of Rathke and Soboci\'{n}ski~\cite{modularLTS}.

\paragraph{{\bf Innocent presheaves $=$ concurrent strategies}}
The third novelty of our semantics is its strong \emph{game
  semantical} flavor, in the sense of
Hyland-Ong~\cite{DBLP:journals/iandc/HylandO00} and
Nickau~\cite{DBLP:conf/lfcs/Nickau94}, whereas most previous work was
based on coalgebras, bialgebras, presheaves, event structures, or
graph rewriting.  Game semantics was designed to provide denotational
semantics for functional programming languages, but also led to
fully-abstract models for impure features like references or control
operators.  A few authors have defined game semantics for concurrent
languages, as discussed below.

A particular feature of our game is its truly `multi-player' aspect.
An immediate benefit of this is that parallel composition, usually
interpreted as a complex operation, is here just a move in the game,
allowing a player to \emph{fork} into two. On the other hand,
considering a multi-player game opens the door to undesirable
strategies, which are then ruled out by imposing an \emph{innocence}
condition, very close in spirit to game semantical
innocence~\cite{DBLP:journals/iandc/HylandO00,DBLP:conf/lfcs/Nickau94}. Indeed,
it amounts to requiring that players interact according only to their
local \emph{view} of the play.

A particular feature of our \emph{strategies} helps dealing with
$\pi$'s \emph{external choice} operator, which allows processes to
accept the same action on some channel $a$ in several ways.  E.g., any
process of the shape $a(x).P + a(x).Q$ may input on $a$ in two ways,
resp.\ leading to $P$ and $Q$.  In order to model this, we use the
fact\footnote{The second author learnt this from a talk by Sam
  Staton.}  that presheaves on the poset $P$ of plays ordered by
prefix are actually a form of concurrent strategies.  Indeed, a
strategy is traditionally a prefix-closed set of plays.  This is
equivalent to a functor $F \colon \op{P} \to \two$, where $\two$ is
the poset $0 \to 1$, viewed as a category: the accepted plays $p \in
P$ are those such that $F(p) = 1$. The action of $F$ on morphisms
ensures prefix closedness, because for any plays $p \leq q$, we must
have $F(q) \leq F(p)$; hence, if $q$ is accepted, then so is $p$. Now,
$\two$ embeds fully and faithfully into sets by $0 \mapsto \emptyset$
and $1 \mapsto 1$ (the latter denoting any singleton set
$\ens{\star}$): this is as if strategies could accept plays in one way
only, $\star$.  Presheaves generalise this by mapping to arbitrary
sets (or even just finite ones, as we do), and we think of $F(p)$ as
the set of possible ways for $F$ to accept $p$. Sticking to strategies
as sets of plays would lead to models of coarser equivalences, as
obtained by Ghica-Murawski~\cite{DBLP:conf/fossacs/GhicaM04} and
Laird~\cite{
DBLP:conf/fsttcs/Laird06}.  (Harmer and
McCusker~\cite{DBLP:conf/lics/HarmerM99}, on the other hand, consider
a finer equivalence for a non-deterministic language rather
than a concurrent one.)

So, our strategies may both be viewed as a concurrent variant of game
semantical innocent strategies, and as an innocent variant of
presheaves.

\paragraph{{\bf Playgrounds}}
Finally, our category of strategies is not constructed by hand. It is
derived using the previously defined theory of
\emph{playgrounds}~\cite{DBLP:conf/calco/Hirschowitz13}. This theory
draws direct inspiration from \emph{Kleene
  coalgebra}~\cite{DBLP:conf/fossacs/BonsangueRS09}.

In Kleene coalgebra, the main idea is that both the syntax and the
semantics of various kinds of automata should be \emph{derived} from
more basic data describing, roughly, the `rule of the game'. Formally,
starting from a well-behaved (\emph{polynomial}) endofunctor on sets,
one constructs both (1) an equational theory and (2) a sound and
complete coalgebraic semantics. This framework has been applied in
traditional automata theory, as well as in quantitative settings.
Nevertheless, its applicability to programming language theory is yet
to be established. E.g., the derived languages do not feature parallel
composition.

Playgrounds may be seen as a first attempt to convey such ideas to the
area of programming language theory.

In~\cite{DBLP:conf/calco/Hirschowitz13}, it was shown how to construct,
from any playground $\D$, (1) a syntax and a transition system
$\SSS_\D$, together with (2) a denotational model in terms of
innocent, concurrent strategies as described above.  This mimicks the
syntactic and semantic models derived in Kleene coalgebra, except
that we replace the equational theory by a transition system, and the
coalgebraic semantics by a game semantics.  Then, a playground $\Dccs$
was constructed and shown, by embedding CCS into $\SSS_{\Dccs}$, to
give rise to a fully-abstract model for fair testing equivalence.

In this paper, we construct a playground $\D$ for $\pi$ and show,
using similar techniques, that it yields a fully-abstract model for
our variant of fair testing equivalence (Theorems~\ref{thm:1}
and~\ref{thm:2}).

\paragraph{{\bf Related work}}
Building upon previous
work~\cite{DBLP:conf/lics/AbramskyM99,DBLP:conf/concur/MelliesM07} on
\emph{asynchronous} games, recent work by Winskel and
collaborators~\cite{RideauW,DBLP:conf/fossacs/Winskel13} attempts to
define a notion of concurrent game encompassing both innocent game
semantics and presheaf models. Ongoing work shows that the model does
contain innocent game semantics, but presheaf models are yet to be
investigated.  

Furthermore, our model is inspired by Girard's
\emph{ludics}~\cite{DBLP:journals/mscs/Girard01}, Melliès's game
semantics in string diagrams~\cite{DBLP:conf/lics/Mellies12}, Harmer
et al.'s categorical combinatorics of
innocence~\cite{DBLP:conf/lics/HarmerHM07}, and combinatorial
structures from algebraic topology~\cite{LeinsterHC}.

Finally, Hildebrandt's approach to fair testing
equivalence~\cite{DBLP:journals/tcs/Hildebrandt03} uses related
techniques, namely sheaves. We also use sheaves: our innocence
condition may be viewed as a sheaf condition, as briefly reviewed in
Sect.~\ref{sec:strats}.  In Hildebrandt's work, sheaves are used to
correctly handle infinite behaviour, whereas here they are used to
force reactions of players to depend only on their view.

\paragraph{{\bf Plan}} In Sect.~\ref{sec:playground}, we sketch the
construction of our playground for $\pi$, recalling the notion along
the way.  We emphasise one particular axiom for playgrounds, which was
most challenging when passing from CCS to $\pi$. We then recall the
notion of strategy.  In Sect.~\ref{sec:pi}, we recall and instantiate
the transition system for strategies constructed
in~\cite{DBLP:conf/calco/Hirschowitz13} and define our variant of fair
testing equivalence. Finally, we state our main results.

\paragraph{{\bf Perspectives}}
We plan to adapt our semantics to other calculi like the Join and
Ambients calculi, and ultimately get back to functional calculi.  We
hope to eventually generalise it, e.g., to some SOS format. More
speculative directions include (1) defining a notion of morphism for
playgrounds which would induce translations between strategies, and
find sufficient conditions for such morphisms to preserve, resp.\
reflect behavioural equivalences; (2) applying playgrounds beyond
programming language semantics; in particular, preliminary work shows
that playgrounds easily account for cellular automata, which provides
a testbed for morphisms of
playgrounds~\cite{DBLP:journals/tcs/DelormeMOT11}.

\paragraph{{\bf Notation}}
Throughout the paper, any finite ordinal $n$ is seen as $\ens{1, 
  \ldots, n}$ (rather than $\ens{0, \ldots, n-1}$). 
$\Set$ is the category of sets; $\set$ is the category of finite
ordinals and arbitrary maps; $\ford$ is the category of finite
ordinals and monotone maps. For any category $\C$, put $\Psh{\C} =
[\op\C,\Set]$, $\FPsh{\C} = [\op\C,\set]$, $\OPsh{\C} =
[\op\C,\ford]$, and let $\Chatf$ denote the category of \emph{finite}
presheaves, i.e., those presheaves $F$ such that $\sum_{c \in \ob
  (\C)} F (c)$ is finite. For all presheaves $F$ of any such kind, $x
\in F(d)$, and $f \colon c \to d$, let $x \cdot f$ denote $F(f)(x)$.

Our $\pi$-calculus processes will be infinite terms generated by
the typing rules:
\begin{mathpar}
  \inferrule{ \ldots \\ \Gam \cdot \alpha_i \vdash P_i \\ \ldots (\forall i \in n) %
    }{ %
      \Gam \vdash \sum_{i \in n} \alpha_i.P_i
    }
    \and
    \inferrule{\Gam \vdash P \\ \Gam \vdash Q}{\Gam \vdash P \para Q}
    \and
    \inferrule{\Gam + 1 \vdash P}{\Gam \vdash \nu.P}~\cdot
\end{mathpar}
$\Gam$ ranges over finite ordinals viewed as sets of variables
$\ens{1, \ldots, \Gam}$, and $\Gam \cdot \alpha$ assumes that $\alpha
::= \send{a}{b} \aalt a \aalt \tick$, for $a,b \in \Gam$. In that
case, $\Gam \cdot \sendab = \Gam \cdot \tick = \Gam$ and $\Gam \cdot a
= \Gam+1$. The $\tick$ form is a `tick' action used to define fair
testing equivalence.  This is a de Bruijn-like presentation, which we
equip with any standard reduction semantics~\cite{Milner89}, viewed
as a reflexive graph $\picalc$.  For any $\Gam \vdash P$ and map $h
\colon \Gam \to \Del$ (of finite sets), we denote by $\Del \vdash
P[h]$ the result of renaming channels according to $h$ in $P$.

Finally, and this will only be used in sketching our proof of
Theorem~\ref{thm:1}, we consider a slightly more general notion of
\lts{} than usual.  We work in the category $\Gph$ of reflexive
(directed, multi) graphs, and our category of \ltss{} over $A$ is the
slice category $\Gph / A$.  The usual notion of \anlts{} over an
alphabet $\Sigma$ is recovered by taking for $A$ the free one-vertex
reflexive graph with edges in $\Sigma$ (and by restricting to faithful
morphisms over $A$).

\section{Diagrams and plays}\label{sec:playground}
In this section, we sketch the construction of our playground for
$\pi$, and recall the notion of strategies. As sketched in the
introduction, our playground will model a multi-player game,
consisting of positions and plays between them. Positions are certain
graph-like objects, where vertices represent players and channels.
But what might be surprising is that moves are not just a binary
relation between positions, because we not only want to say
\emph{when} there is a move from one position to another, but also
\emph{how} one moves from one to the other. This will be implemented
by viewing moves from $X$ to $Y$ as \emph{cospans} $X \to M \ot Y$ in
a certain category $\Chatf$ of higher-dimensional graph-like objects,
where $X$ and $Y$ respectively are the initial and final positions,
and $M$ describes how one goes from $X$ to $Y$.  By composing such
moves (by pushout), we get a bicategory $\Dv$ of positions and
plays. We then go on and equip this bicategory with more structure,
namely that of a pseudo double category, where one direction models
dynamics, and the other models space, e.g., the inclusion of a
position into another. We then explain why the so-called `fibration'
axiom is non-obvious and describe our solution.

\subsection{Diagrams}
In preparation for the definition of our base category $\C$, recall
that (directed, multi) graphs may be seen as presheaves over the
category freely generated by the graph with two objects $\star$ and
$[1]$, and two edges $s,t \colon \star \to [1]$. Any presheaf $G$
represents the graph with vertices in $G(\star)$ and edges in $G[1]$,
the source and target of any $e \in G[1]$ being respectively $e \cdot
s$ and $e \cdot t$. A way to visualise how such presheaves represent
graphs is to compute their \emph{categories of
  elements}~\cite{MM}. Recall that the category of elements $\int G$
for a presheaf $G$ over $\C$ has as objects pairs $(c,x)$ with $c \in
\C$ and $x \in G(c)$, and as morphisms $(c,x) \to (d,y)$ all morphisms
$f \colon c \to d$ in $\C$ such that $y \cdot f = x$. This category
admits a canonical functor $\pi_G$ to $\C$, and $G$ is the colimit of
the composite $\int G \xto{\pi_G} \C \xto{\yoneda} \Chat$ with the
Yoneda embedding. E.g., the category of elements for $\yoneda[1]$ is
the poset $(\star, s) \xto{s} ([1],\id_{[1]}) \xot{t} (\star, t)$,
which could be pictured as
\diagramme[stringdiag={0.1}{0.6}]{baseline=(A.south)}{%
  \path[-,draw] %
  (A) edge (E) %
  (B) edge (E) %
  ; %
  \node at ($(B.south east) + (.1,0)$) {,} ;%
}{%
  \joueur{A} \& \node[regular polygon,anchor=center,regular polygon
  sides=3,fill,minimum size=3pt,draw,rotate=-90] (E) {}; \&
  \joueur{B} %
}{%
} \hspace*{-.7em} where dots represent vertices, the triangle
represents the edge, and links materialise the graph of $G(s)$ and
$G(t)$, the convention being that $t$ goes from the apex of the
triangle.  We thus recover some graphical intuition.

Our string diagrams will also be defined as (finite) presheaves over
some base category $\C$. Let us give the formal definition of $\C$ for
reference.  We advise to skip it on a first reading: we then attempt
to provide some graphical intuition.
\begin{figure}[t]
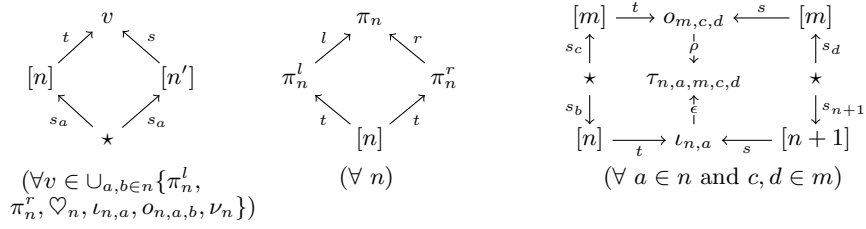

    \begin{mathpar}
      \begin{minipage}[t]{0.22\textwidth}
        \centering
        {\diag(.4,.4){%
            \&|(v)| v \\
            |(n)| [n] \& \& |(n')| [n'] \\
            \& |(star)| \star %
          }{%
            (star) edge[labelbl={s_a}] (n) %
            edge[labelbr={s_a}] (n') %
            (n) edge[labelal={t}] (v) %
            (n') edge[labelar={s}] (v) %
          }} \\
        ($\forall v \in {\cup_{a,b \in n}} \{\forkln,\linebreak \forkrn,
        \tickn, \inna, \outnab, \nun\}$)
      \end{minipage}
  \and
      \begin{minipage}[t]{0.22\textwidth}
        \centering
        {\diag(.4,.4){%
      \&|(v)| \forkn \\
     |(n)| \forkln \& \& |(n')| \forkrn \\
      \& |(star)| [n] %
    }{%
      (star) edge[labelbl={t}] (n) %
       edge[labelbr={t}] (n') %
       (n) edge[labelal={l}] (v) %
       (n') edge[labelar={r}] (v) %
    }} \\
  ($\forall$ $n$)
  \end{minipage}
  \and 
  \begin{minipage}[t]{0.44\textwidth}
    \centering
    {\diag(.4,.3){%
        |(n)| [m]   \&  |(sender)| \outmcd \& |(ni)| [m] \\
        |(star)| \star \& |(v)| \taunamcd \& |(stari)| \star \\
        |(n')| [n] \& |(receiver)| \inna \& |(n'i)| [n+1] %
        \& %
      }{%
        (star) edge[labell={s_c}] (n) %
        edge[labell={s_b}] (n') %
        (n) edge[labela={t}] (sender) %
        (n') edge[labelb={t}] (receiver) %
        (sender) edge[labelo={\rho}] (v) %
        (receiver) edge[labelo={\epsilon}] (v) %
        (stari) edge[labelr={s_d}] (ni) %
        edge[labelr={s_{n+1}}] (n'i) %
        (ni) edge[labela={s}] (sender) %
        (n'i) edge[labelb={s}] (receiver) %
      }} \\
    ($\forall$ $a \in n$ and $c,d \in m$)
    \end{minipage}%
  \end{mathpar}%
  \caption{Equations for $\C$}
  \label{fig:equationsC}
\end{figure}
\begin{definition}
  Let $G_{\C}$ be the graph with, for all $n$, $m$, with $a,b \in n$ and $c,d \in m$:
  \begin{itemize}
  \item vertices $\star$, $[n]$, $\forkln$, $\forkrn$, $\forkn$,
    $\nun$, $\tickn$, $\inna$, $\outnab$, and $\taunamcd$;
  \item edges $s_1,...,s_n : \star \to [n]$;
  \item for all $v \in \ens{\forkln,\forkrn,\tickn,\outnab}$, edges
    $s,t : [n] \to v$;
  \item edges $[n] \xto{t} \nun \xot{s} [n+1]$ and $[n] \xto{t} \inna
    \xot{s} [n+1]$;
  \item edges $\forkln \xto{l} \forkn \xot{r} \forkrn$;
  \item edges $\inna \xto{\rho} \taunamcd \xot{\epsilon} \outmcd$.
  \end{itemize}

  Let $\C$ be the free category on $G_{\C}$, modulo the equations in
  Fig.~\ref{fig:equationsC}, where, in the left-hand one, $n'$ is $n+1$
  when $v = \nun$ or $\inna$, and $n$ otherwise.
\end{definition}
Our category of string diagrams will be the category of finite
presheaves $\Chatf$.

\begin{wrapfigure}{r}{0pt}
  \begin{minipage}[c][3em]{0.33\linewidth}
    \vspace*{-.2em}
      \diagramme[stringdiag={.8}{1.3}]{}{%
}{%
  \node (s_1) {$(\star, s_1)$}; \& \node (s_2) {$(\star, s_2)$}; \& \node (s_3) {$(\star, s_3)$}; \\
    \& \node (id) {$([3], \id_{[3]})$}; 
    }{%
      (s_1) edge (id) %
      (s_2) edge (id) %
      (s_3) edge (id) %
    }
\end{minipage}
\end{wrapfigure}
To explain this seemingly arbitrary definition, let us compute a few
categories of elements. Let us start with an easy one, that of $[3]
\in \C$ (we implicitly identify any $c \in \C$ with $\yoneda c$). An
easy computation shows that it is the poset pictured in the top part
on the right. We will think of it as a position with one

\begin{wrapfigure}{r}{0pt}
      \diagramme[stringdiag={.8}{1.3}]{}{%
    \path[-,draw] %
    (a) edge (j1) %
    (c) edge (j1) %
    (b) edge (j1) %
    ; %
}{%
    \canal{a}     \& \canal{b} \&  \canal{c} \\
    \& \joueur{j1}
    }{%
    }
\end{wrapfigure}
\noindent 
player $([3],\id_{[3]})$ connected to three channels, and draw it as
in the bottom part on the right, where the bullet represents the
player, and circles represent channels. 
In particular, elements over $[3]$ represent ternary
players, while elements over $\star$ represent channels.  The
\emph{positions} of our game are finite presheaves empty except
perhaps on $\star$ and $[n]$'s. Other objects will represent moves.
The graphical representation
is slightly ambiguous, because the ordering of channels known to
players is implicit.  We will disambiguate in the text when
necessary. 

A more difficult category of elements is that of $\paraof{2}$. It is
the poset generated by the graph on the left (omitting the base
objects for conciseness):
  \begin{center}
    \diag (.4,.3) {%
      \& \& |(lt)| l s \& \& |(rt)| r s \& \& \\ 
      |(lt1)| l s s_1 \& \& |(l)| l \& |(para)| \id_{\paraof{2}} \& |(r)| r \& \& |(lt2)| l s s_2 \\ 
      \& \& \& |(ls)| l t = r t \&  \& \& 
    }{%
      (lt1) 
      edge (ls) %
      (lt2) 
      edge (ls) %
      (ls) edge (l) edge (r) %
      (lt) edge (l) %
      (rt) edge (r) %
      (l) edge (para) %
      (r) edge (para) %
      (lt1) edge[identity] (lt1) 
      edge (lt) %
      edge[fore,bend left=10] (rt) %
      (lt2) edge[identity] (lt2) 
      edge (rt) %
      edge[bend right=10,fore] (lt) %
    }
    \hfil
          \diagramme[stringdiag={.3}{.6}]{}{
    \node[diagnode,at= (t1.south east)] {\ \ \ .} ; %
  }{%
     \& \& \joueur{t_1} \&  \& \joueur{t_2} \\
    \& \&   \&  \\
    \& \ \& \\
    \canal{t0} \& \& \& \couppara{para} \& \& \& \canal{t1} \\ 
    \& \ \& \\
    \& \&  \\
    \& \& \& \joueur{s} \& \& \& \& 
  }{%
    (para) edge[-] (t_2) %
    (t1) edge[-,bend right=10] (t_2) %
    (t0) 
    edge[-] (s) %
    (t1) 
    edge[-] (s) %
    (s) edge[-] (para) %
    (para) edge[fore={.3}{.3},-] (t_1)
    (t0) edge[fore={.5}{.5},-,bend left=10] (t_2) %
    (t0) edge[-,bend left=15] (t_1) %
    (t1) edge[-,fore={1}{.5},bend right=10] (t_1) %
  }  
  \end{center}
  We think of it as a binary player ($l t$) forking into two players
  ($l s$ and $r s$), and draw it as on the right. 
  \newcommand{\longueurfigun}{.6} \newcommand{\separation}{} The
  graphical convention is that a black triangle stands for the
  presence of $\id_{\forkof{2}}$, $l$, and $r$. Below, we
  represent just $l$ as a white triangle with only a left-hand branch,
  and symmetrically for $r$.  Furthermore, in all our pictures, time
  flows `upwards'. 

  Another category of elements, characteristic of the $\pi$-calculus,
  is the one for synchronisation $\taunamcd$. The case $(n,a,m,c,d) =
  (1,1,3,2,3)$ is the poset generated by the graph on the left of
  Fig.~\ref{fig:tau}, which we will draw as on the right. %
  \begin{figure}[t]
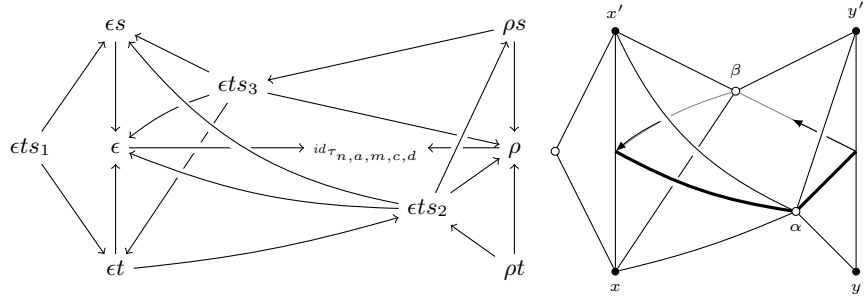

    \begin{mathpar}
      \diagramme[diag={.4}{.5}]{}{%
        \path[->] %
        (s) edge[bend right=5] (t1) %
        (t2) edge (s) %
        (t0) edge (t) %
        (t0) edge (s) %
        (t2) edge (t) %
        (s') edge (t1) %
        (t') edge (t2) %
        (t2) edge[->] (iota') %
        ; %
        \path[->,draw] %
        (t1) edge[fore={.3}{.5},bend left=10] (iota) %
        (t1) edge[fore={0.3}{.5}] (iota') %
        ; %
        \path[draw,->] (t2) edge[->,bend right=10] (iota) %
        (iota) edge[fore={.4}{0}] (tau) %
        (iota') edge[fore={.4}{0}] (tau) %
        ; %
        \path[->] (t1) edge[fore={.3}{.3}] (t') %
        ; %
        \path[->] (t1) edge[fore={.1}{.5},bend left=20] (t) ; %
        \foreach \x/\y in {s/iota,t/iota,s'/iota',t'/iota'} \path[->]
        (\x) edge (\y) ; %
      }{%
        \&  |(t)| \epsilon s \& \&   \& \&  |(t')| \rho s   \\
        \&   \& \& |(t2)| \epsilon t s_3 \\
        |(t0)[anchor=base west]| \epsilon t s_1 \&  |(iota)| \epsilon \& \& \& |(tau)[anchor=base east]| {\scriptscriptstyle \id_{\taunamcd}} \&  |(iota')| \rho \\
        \&  \& \& \& |(t1)| \epsilon t s_2 \\
        \& |(s)| \epsilon t \& \& \& \& |(s')| \rho t 
      }{%
      }%
      \diagramme[stringdiag={.8}{.8}]{}{%
        \path[-] %
        (s) edge[bend right=5] (t1) %
        (t2) edge (s) %
        (t0) edge (t) %
        (t0) edge (s) %
        (t2) edge (t) %
        (s') edge (t1) %
        (t') edge (t2) %
        (iota') edge[gray,very thin] (t2) %
        ; %
        \path[-] %
        (iota) edge[fore={.3}{.5},very thick,bend right=10] (t1) %
        (iota') edge[fore={0.3}{.5},very thick] (t1) %
        ; %
        \path[] (t2) edge[color=gray,very thin,bend right=10]
        node[coordinate,pos=.6] (iotatip) {} (iota) ; %
        \path (iota') -- (t2) node[coordinate,pos=0.55] (iotatip') {}
        ; %
        \path[-] (iotatip) edge[bend right=7,-latex] (iota) ; %
        \path[-] (iota') edge[-latex] (iotatip') ; %
        \path[-] (t1) edge[fore={.3}{.3}] (t') %
        ; %
        \path[-] (t) edge[fore={.1}{.5},bend right=20] (t1) ; %
        \foreach \x/\y in {s/t,s'/t'} \path[-] (\x) edge (\y) ; %
        \node[anchor=south] at (t2.north) {$\scriptstyle \beta$} ; %
        \node[anchor=north] at (t1.south) {$\scriptstyle \alpha$} ; %
        \node[anchor=south] at (t.north) {$\scriptstyle x'$} ; %
        \node[anchor=north] at (s.south) {$\scriptstyle x$} ; %
        \node[anchor=south] at (t'.north) {$\scriptstyle y'$} ; %
        \node[anchor=north] at (s'.south) {$\scriptstyle y$} ; %
        \node[diagnode,at= (s'.south east)] {\ \ \ .} ; %
      }{%
        \& \joueur{t} \& \& \& \& %
        \joueur{t'}   \\
        \&   \& \& \canal{t2} \\
        \canal{t0} \&  \coupout{iota}{0} \& \&   \& \&  \coupin{iota'}{0} \\
        \&  \& \& \& \canal{t1} \\
        \& \joueur{s} \& \& \& \& \joueur{s'} 
      }{%
      }%
    \end{mathpar}
    \caption{Category of elements for $\tau_{1,1,3,2,3}$ and graphical representation}
\label{fig:tau}
\end{figure} %
The left-hand ternary player $x$ outputs its $3$rd channel, here $\beta$,
on its $2$nd channel, here $\alpha$. The right-hand unary player $y$
receives the sent channel on its $1$st channel, here $\alpha$. The carrier
channel is marked with thick lines, while the transmitted channel is
indicated with arrows.  Both players have two occurrences, one before
and one after the move, respectively marked as $x / x'$ and $y / y'$.
Both $x$ and $x'$ have arity $3$ here, while $y$ has arity $1$ and
$y'$, having gained knowledge of channel $\beta$, has arity
$2$. 

We leave the computation of other categories of elements as an
exercise to the reader. 
The remaining diagrams for $\paralp$, $\pararp$, $\outmcd$, $\inna$,
$\tickp$, and $\nup$ are depicted below, for $p = 2$ and
$(m,c,d,n,a) = (3,2,3,1,1)$: \\ \noindent
\begin{minipage}[t][5em]{\textwidth}
    \diagramme[stringdiag={.2}{.33}]{}{ }{%
       \& \joueur{t_1} \& \& \& 
      \\ 
      \& \&   \&  \\
      \& \ \& \\
      \canal{t0} \& \& \coupparacreux{para} \& \& \canal{t1}
      \\ 
      \& \ \& \\
      \& \&  \\
       \& \& \joueur{s} \& \&
    }{%
      (t0) edge[-] (t_1) %
      (t1) edge[-,bend right=20] (t_1) %
      (t0) edge[-] (s) %
      (t1) edge[-] (s) %
      (s) edge[-] (para) %
      (para) edge[-] (t_1) %
    }
    \separation
%
    \diagramme[stringdiag={.2}{.33}]{}{ }{%
       \& \& \& \joueur{t_2} \& 
      \\ 
      \& \&   \\
      \& \ \& \\
      \canal{t0} \& \& \coupparacreux{para}
      \& \& \canal{t1}
      \\ 
      \& \ \& \\
      \& \&  \\
       \& \& \joueur{s} \& \&
    }{%
      (t0) edge[-,bend left=20] (t_2) %
      (t1) edge[-] (t_2) %
      (t0) edge[-] (s) %
      (t1) edge[-] (s) %
      (s) edge[-] (para) %
      (para) edge[-] (t_2) %
    }
    \separation
%
    \diagramme[stringdiag={.3}{.5}]{baseline=($(iota.south)$)}{%
      \path[-] %
      (t2) edge (s) %
      (t1) edge (t) %
      (t0) edge (t) %
      (t2) edge (t) %
      (t) edge (iota.west) %
      (s) edge (iota.west) %
      (t0) edge[bend right=20] (s) %
      (t1) edge (s) %
      ; %
      \movepiout[1]{t1}{iota}{t2}{.4} %
      \foreach \x/\y in {s/t} \path[-] (\x) edge (\y) ; %
    }{%
      \& \& \joueur{t}  \\
      \&  \\
      \canal{t0} \& \& \coupout{iota}{0}  \& \canal{t2} \\ 
      \& \canal{t1} \\ 
      \& \& \joueur{s} 
    }{%
    }%
    \separation
    \diagramme[stringdiag={.6}{\longueurfigun}]{baseline=($(in.south)$)}{
      \path[-] (a) 
      edge (p) %
      (in) edge (p) edge (p') %
      edge[gray,very thin] (b) %
      (p') edge (a) edge (b) %
      ; %
      \movepiin[.5]{a}{in}{b}{.6} %
      \foreach \x/\y in {p/p',a/a} \path[-] (\x) edge (\y) ; %
    }{ %
       \& \joueur{p'} \&  \\ %
      \canal{a} \& \coupout{in}{0} \& \canal{b} \\ %
       \& \joueur{p} 
    }{%
    } %
    \separation
%
    \diagramme[stringdiag={.6}{\longueurfigun}]{}{ \path[-] (a) edge
      (a) %
      edge (p) %
      (tick) edge[shorten <=-1pt] (p) edge[shorten <=-1pt] (p') %
      (p') edge (a) edge (b) %
      (b) edge (p) edge (b) %
      ; %
    }{ %
       \& \joueur{p'} \& \\ %
      \canal{a} \& \couptick{tick} \& \canal{b} \\ %
       \& \joueur{p} \&  %
    }{%
    } \separation
    %
    \diagramme[stringdiag={.3}{.5}]{baseline=($(nu.center)$)}{%
      \path[-,draw] %
      (t1) edge (s) %
      (t1) edge (t) %
      (t0) edge (t) %
      (t2) edge (t) %
      (t) edge (nu) %
      (s) edge (nu) %
      (nu) edge[gray,very thin] (t2) %
      (t0) edge[bend right=20] (s) %
      ; %
      \node[diagnode,at= (s.base east)] {\ \ \ .} ; %
    }{%
      
      \& \& \joueur{t} \& \&  \& \\
      \&  \&    \\
      \canal{t0} \& \& \coupnu{nu} \& \& \canal{t2}  \\
      \& \canal{t1} \\
      \& \& \joueur{s} \& 
    }{%
    }%
  \end{minipage}
  The first two are \emph{views}, in the game semantical sense, of the
  fork move $\forkof{2}$ explained above. The next two, $\outmcd$ (for
  `output') and $\inna$ (for `input'), respectively represent what the
  sender and receiver can see of the above synchronisation move. The
  next diagram is a `tick' move, used for defining fair testing
  equivalence. The last one is a channel creation move.

\subsection{From diagrams to moves}
In the previous section, we have defined our category of diagrams as
$\Chatf$, and provided some graphical intuition on its objects.  The
next goal is to construct a bicategory whose objects are positions (recall:
presheaves empty except perhaps on $\star$ and $[n]$'s), and whose
morphisms represent plays in our game. We start in this section by
defining moves, and continue in the next one by explaining how to
compose moves to form plays. Moves are defined in two stages:
\emph{seeds}, first, give the local form for moves; moves are then
defined by embedding seeds into bigger positions.

To start with, until now, our diagrams contain no information about
the `flow of time'. To add this information, for each diagram $M$
representing a move, we define its initial and final positions, say
$X$ and $Y$, and view the whole move as a cospan $Y \xto{s} M
\xot{t} X$. We have taken care, in drawing our diagrams before, of
placing initial positions at the bottom, and final positions at the
top.  So, e.g., the initial position $X$ and final position $Y$ for
the synchronisation move are pictured on the right and they map into
(the representable
\begin{wrapfigure}[2]{r}{0pt}
    \diagramme[stringdiag={.3}{.5}]{baseline=($(nu.center)$)}{%
      \path[-,draw] %
      (t) edge (t0) %
      edge (a) %
      edge (b) %
      (t') edge (a) %
      ; %
    }{%
      \& \& \canal{b} \\
      \canal{t0} \& \joueur{t} \& \& \joueur{t'} \\
      \& \& \canal{a} \\
    }{%
    }%
     \ {$\leadsto$} \ 
    \diagramme[stringdiag={.3}{.5}]{baseline=($(nu.center)$)}{%
      \path[-,draw] %
      (t) edge (t0) %
      edge (a) %
      edge (b) %
      (t') edge (a) %
      edge (b) %
      ; %
    }{%
      \& \& \canal{b} \\
      \canal{t0} \& \joueur{t} \& \& \joueur{t'} \\
      \& \& \canal{a} \\
    }{%
    }%
\end{wrapfigure}
presheaf over) $\tau_{1,1,3,2,3}$ in the
obvious ways, yielding the cospan $Y \xto{s} M \xot{t} X$.
We leave it to the reader to define, based on the above pictures, the expected cospans
\begin{center}
  \Diag(.6,.2){%
    \foreach \i in {1,...,8} %
    \path[->,draw] %
    (m-1-\i) edge (m-2-\i) %
    (m-3-\i) edge (m-2-\i) %
    ; %
    }{%
    {[n] \para [n]} \& {[m] \paraofij{c,d}{a,n+1} [n+1] }  \&  { [n] }  \&  { [n] }  \&  { [m] }  \&  
    { [n+1] }  \&  { [n] }  \&  { [n+1] } \\
    { \forkn }  \&  { \taunamcd }  \&  { \forkln }  \&  { \forkrn }  \&  { \outmcd }  \&  
  { \inna }  \&  { \tickn }  \&  { \nun } \\
    { [n] }  \&  { [m] \paraofij{c}{a} [n] }  \&  { [n] }  \&  { [n] }  \&  { [m] }  \&  
  { [n] }  \&  { [n] }  \&  { [n], } %
  }{%
  }
\end{center}
where initial positions are on the bottom row, and we denote by
$[m] \paraofij{a_1,\ldots,a_p}{c_1,\ldots,c_p} [n]$ the position
consisting of an $m$-ary player $x$ and an $n$-ary player $y$,
quotiented by the equations $x \cdot s_{a_k} = y \cdot s_{c_k}$ for
all $k \in p$. When both lists are empty, by convention, $m=n$
and the players share all channels in order.
\begin{definition}
  These cospans are called \emph{seeds}. Their lower legs are
  called \emph{t-legs}.
\end{definition}

As announced, the moves of our game are obtained by embedding seeds
into bigger positions. This means, e.g., allowing a fork move to occur
in a position with more than one player. We proceed as follows.
\begin{definition}\label{def:interface}
  Let the \emph{interface} of a seed $Y \xto{s} M \xot{t} X$ be
  $I_X = X(\star) \cdot \star$, i.e., the position consisting only of
  the channels of the initial position of the seed.
\end{definition}

\begin{wrapfigure}{r}{0pt}
  \begin{minipage}[c][3em]{0.33\linewidth}
        \diag(.3,.6){%
    \&|(I)| I_X \\
   |(X)| Y \&|(M)| M \&|(Y)| X %
  }{%
    (I) edge (X) edge (M) edge (Y) %
    (X) edge (M) %
    (Y) edge (M) %
  }
  \end{minipage}
\end{wrapfigure}
Since channels present in the initial position remain in the final
one, we have for each seed a commuting diagram as on the right.  By gluing any
position $Z$ to the seed along its interface, we obtain a new cospan,
say $Y' \to M' \ot X'$.  I.e., for any morphism $I_X \to Z$, we push
$I_X \to X$, $I_X \to M$, and $I_X \to Y$ along $I_X \to Z$ and use
the universal property of pushout, as in:
  \begin{center}
    \Diag(.3,.8){%
      \pbk{X}{X'}{Z} %
      \pullback{Z}{M'}{M}{draw,-} %
      \pullback{Z}{Y'}{Y}{draw,-} %
    }{%
      \& |(Y)| Y \& \& |(Y')| {Y'} \\
      \& \ \& \\
      \& |(M)| M \& \& |(M')| M'  \\
      |(I)| I_X \&\& |(Z)| Z \\
      \& |(X)| X \& \& |(X')| X'. %
    }{%
      (Z) edge[] (X') %
      edge (M') %
      edge (Y') %
      (I) edge[] (X) %
      edge (Z) %
      edge (M) %
      edge (Y) %
      (Y) edge[fore] (Y') %
      (M) edge[fore] (M') %
      (X) edge (X') %
      (X') edge[dashed] (M') %
      (Y') edge[dashed] (M') %
      (X) edge[fore] (M) %
      (Y) edge[fore] (M) %
    }
  \end{center}
  \begin{definition}
    Let \emph{(global) moves} be all cospans obtained in this way.
  \end{definition}
  Recall that colimits in presheaf categories are pointwise. So, e.g.,
  taking pushouts along injective maps graphically corresponds to
  gluing diagrams together. Let us do a few examples.
  \begin{example}\label{ex:forkmove}
    The copsan $[2]\para[2] \xto{[ls,rs]} \forkof{2} \xot{lt} [2]$ has
    as canonical interface the presheaf $I_{[2]} = 2 \cdot \star$,
    consisting of two channels, say $a$ and $b$.  Consider the
    position $[2] + \star$ consisting of a player $y$ with two
    channels $b'$ and $c$, plus an additional channel $a'$. Further
    consider the map $h \colon I_{[2]} \to [2]+\star$ defined by $a
    \mapsto a'$ and $b \mapsto b'$. The pushout
    \begin{mathpar}
      \Diag{%
        \pbk{pi}{M'}{star} %
        }{%
         |(I2)| {I_{[2]}} \& |(star)| {[2]+\star} \\
         |(pi)| {\forkof{2}} \&|(M')| {M'} %
        }{%
          (I2) edge (star) edge (pi) %
          (pi) edge (M') %
          (star) edge (M') %
        } %
        \and \mbox{is} \and
                \diagramme[stringdiag={.3}{.6}]{}{
    \node[diagnode,at= (c.south east)] {\ \ \ .} ; %
        \node[anchor=south] at (t_1.north) {$\scriptstyle x_1$} ; %
        \node[anchor=south] at (t_2.north) {$\scriptstyle x_2$} ; %
        \node[anchor=north] at (s.south) {$\scriptstyle x$} ; %
        \node[anchor=north] at (y.south) {$\scriptstyle y$} ; %
        \node[anchor=north] at (c.south) {$\scriptstyle c$} ; %
        \node[anchor=north] at (t0.south) {$\scriptstyle a=a'$} ; %
        \node[anchor=north] at (t1.south) {$\scriptstyle b=b'$} ; %
  }{%
     \& \& \joueur{t_1} \&  \& \joueur{t_2} \\
    \& \&   \&  \\
    \& \ \& \\
    \canal{t0} \& \& \& \couppara{para} \& \& \& \canal{t1} \& \& \joueur{y} \& \& \canal{c} \\ 
    \& \ \& \\
    \& \&  \\
    \& \& \& \joueur{s} \& \& \& \& 
  }{%
    (para) edge[-] (t_2) %
    (t1) edge[-,bend right=10] (t_2) %
    (t0) 
    edge[-] (s) %
    (t1) 
    edge[-] (s) %
    (s) edge[-] (para) %
    (y) edge[-] (t1) %
     edge[-] (c) %
    (para) edge[fore={.3}{.3},-] (t_1)
    (t0) edge[fore={.5}{.5},-,bend left=10] (t_2) %
    (t0) edge[-,bend left=15] (t_1) %
    (t1) edge[-,fore={1}{.5},bend right=10] (t_1) %
  }  
    \end{mathpar}
  \end{example}
  \begin{example}\label{ex:interface}
    The canonical interface, being the interface of the initial
    position, may not contain all channels of the move. In particular,
    for an input move which is not part of any synchronisation, the
    received channel cannot be part of the initial position.
  \end{example}

\subsection{From moves to plays}
\begin{wrapfigure}[5]{r}{0pt}
    \diag(.3,1){%
      \&|(U)| U \\
     |(X)| X \& \&|(Y)| Y \\
      \&|(V)| V %
    }{%
      (X) edge (U) edge (V) %
      (Y) edge (U) edge (V) %
      (U) edge (V) %
    }
\end{wrapfigure}
Having defined moves, we now define their composition to define our
bicategory $\Dv$ of positions and plays.  $\Dv$ will be a
sub-bicategory of $\Cospan{\Chatf}$, the bicategory which has as
objects all finite presheaves on $\C$, as morphisms $X \to Y$ all
cospans $X \to U \ot Y$, and as 2-cells $U \to V$ all commuting
diagrams as on the right.  Composition is given by pushout, and hence
is not strictly associative. \begin{remark} We choose to view the
  initial position as the \emph{target} of the morphism in
  $\Cospan{\Chatf}$, in order to emphasise below that the fibration
  axiom is very close to a universal property of
  pullback~\cite{Jacobs}.
\end{remark}

\begin{definition}
  \emph{Plays} are composites of moves in $\Cospan{\Chatf}$.  Let
  $\Dv$ be the sub-bicategory consisting of positions and plays.
\end{definition}
\begin{remark}
We do not yet specify what the 2-cells of $\Dv$ are. This will follow from the next section.  
\end{remark}

Intuitively, composition by pushout glues
diagrams on top of each other, which features some concurrency. 
\begin{example}
  Composing the move of Example~\ref{ex:forkmove} with a forking
  move by $y$ yields
    \begin{center}
                \diagramme[stringdiag={.3}{.6}]{}{
    \node[diagnode,at= (c.south east)] {\ \ \ .} ; %
        \node[anchor=south] at (t_1.north) {$\scriptstyle x_1$} ; %
        \node[anchor=south] at (t_2.north) {$\scriptstyle x_2$} ; %
        \node[anchor=south] at (y_1.north) {$\scriptstyle y_1$} ; %
        \node[anchor=south] at (y_2.north) {$\scriptstyle y_2$} ; %
        \node[anchor=north] at (s.south) {$\scriptstyle x$} ; %
        \node[anchor=north] at (y.south) {$\scriptstyle y$} ; %
        \node[anchor=north] at (c.south) {$\scriptstyle c$} ; %
        \node[anchor=north] at (t0.south) {$\scriptstyle a=a'$} ; %
        \node[anchor=north] at (t1.south) {$\scriptstyle b=b'$} ; %
  }{%
     \& \& \joueur{t_1} \&  \& \joueur{t_2} \& \& \& \& \joueur{y_1} \& \& \joueur{y_2} \\
    \& \&   \&  \\
    \& \ \& \\
    \canal{t0} \& \& \& \couppara{para} \& \& \& \canal{t1} \& \&  \& \couppara{para'} \&  \& \& \canal{c} \\
    \& \ \& \\
    \& \&  \\
    \& \& \& \joueur{s} \& \& \& \& \& \& \joueur{y} 
  }{%
    (t1) edge[-,bend right=10] (t_2) %
    (t0) 
    edge[-] (s) %
    (t1) 
    edge[-] (s) %
    (para) edge[-] (t_2) %
    (s) edge[-] (para) %
    (y) edge[-] (t1) %
     edge[-] (c) %
    (para) edge[fore={.3}{.3},-] (t_1)
    (t0) edge[fore={.5}{.5},-,bend left=10] (t_2) %
    (t0) edge[-,bend left=15] (t_1) %
    (t1) edge[-,fore={1}{.5},bend right=10] (t_1) %
    (c) edge[-,bend right=10] (y_2) %
    (para') edge[-] (y_2) %
    (y) edge[-] (para') %
    (y) edge[-] (t1) %
     edge[-] (c) %
    (para') edge[fore={.3}{.3},-] (y_1) %
    (t1) edge[fore={.5}{.5},-,bend left=10] (y_2) %
    (t1) edge[-,bend left=15] (y_1) %
    (c) edge[-,fore={1}{.5},bend right=10] (y_1) %
  }  
    \end{center}
\end{example}
\begin{example}
  Composition retains causal dependencies between moves. To see this,
  consider the following diagram.  In the initial position, there are
  channels $a,b$, and $c$, and three players $x(a,b), y(b)$, and
  $z(a,c)$ (we indicate the channels known to each player in
  parentheses). In a first move, $x$ sends $a$ on $b$, which is
  received by $y$. In a second move, $z$ sends $c$ on $a$, which is
  received by (the avatar $y'$ of) $y$. The second move is enabled by
  the first one, by which $y$ gains knowledge of $a$. The
  corresponding diagram looks like the following, identifying the two
  framed nodes and the two circled ones:
\begin{center}
      \diagramme[stringdiag={.8}{1.6}]{}{%
        \framenode{a} %
        \framenode{a'} %
        \circlenode{c} %
        \circlenode{c'} %
        \receive{b}{i}{a'}{.4}{} %
        \envoie{a}{o}{b}{.4}{} %
        \envoie{c}{o'}{a'}{.4}{} %
        \receive[fore={.1}{.1}]{a'}{i'}{c'}{.4}{} %
        \node[diagnode,at= (c.south east)] {\ \ \ \ .} ; %
        \node[below=1pt] at (a.south) {$\scriptstyle a$} ; %
        \node[below=1pt] at (b.south) {$\scriptstyle b$} ; %
        \node[below=1pt] at (c.south) {$\scriptstyle c$} ; %
        \node[below=1pt] at (a'.south) {$\scriptstyle a$} ; %
        \node[right=1pt] at (c') {$\scriptstyle c$} ; %
        \node[below=1pt] at (x.south) {$\scriptstyle x$} ; %
        \node[below=1pt] at (y.south) {$\scriptstyle y$} ; %
        \node[below=1pt] at (z.south) {$\scriptstyle z$} ; %
        \node[above left] at (y') {$\scriptstyle y'$} ; %
      }{%
        \& \& \& \joueur{y''} \\
        \& \& \& \coupin{i'}{0} \& \canal{c'} \\
        \& \joueur{x'} \& \& \joueur{y'} \& \& \joueur{z'} \\
    \canal{a} \& \coupout{o}{0} \& \canal{b} \& \coupin{i}{0} \& \canal{a'} \& \coupout{o'}{0} \& \canal{c} \\
    \& \joueur{x} \& \& \joueur{y} \& \& \joueur{z} %
  }{%
    (a) edge[-] (x) %
    edge[-] (x') %
    (b) edge[-] (x) %
    edge[-] (x') %
    edge[-] (y) %
    edge[-] (y') %
    edge[-] (y'') %
    (a') edge[-] (y') %
    edge[-,bend right=10] (y'') %
    edge[-] (z) %
    edge[-] (z') %
    (c) edge[-] (z) %
    edge[-] (z') %
    (c') edge[-] (y'') %
    (i) edge[-] (y) %
    edge[-] (y') %
    (i') edge[-] (y') %
    edge[-] (y'') %
    (o) edge[-] (x) %
    edge[-] (x') %
    (o') edge[-] (z) %
    edge[-] (z') %
  }
\end{center}  
\end{example}

\subsection{A pseudo double category}
\begin{wrapfigure}[6]{r}{0pt}
  \begin{minipage}[t][3em]{0.33\linewidth}
    \vspace*{-.6em}
  \Diag(.6,.8){%
  }{%
    X \& X' \& X'' \\
    Y \& Y' \& Y'' \\
    Z \& Z' \& Z''%
  }{%
    (m-1-1) edge[labelu={h}] (m-1-2) %
    edge[pro,labell={u},twoleft={ur}{}] (m-2-1) %
    (m-2-1) edge[labelu={h'}] (m-2-2) %
    (m-1-2) edge[pro,labell={u'},twoleft={u'r}{},tworight={u'l}{}] (m-2-2) %
    (m-1-2) edge[labelu={k}] (m-1-3) %
    (m-2-2) edge[labelu={k'}] (m-2-3) %
    (m-1-3) edge[pro,labelr={u''},tworight={u''l}{}] (m-2-3) %
    (m-2-1) 
    edge[pro,labell={v},twoleft={vr}{}] (m-3-1) %
    (m-3-1) edge[labela={h''}] (m-3-2) %
    (m-2-2) edge[pro,labell={v'},twoleft={v'r}{},tworight={v'l}{}] (m-3-2) %
    (m-3-2) edge[labela={k''}] (m-3-3) %
    (m-2-3) edge[pro,labelr={v''},tworight={v''l}{}] (m-3-3) %
    (ur) edge[cell={.2},labela={\alpha}] (u'l) %
    (u'r) edge[cell={.2},labela={\alpha'}] (u''l) %
    (vr) edge[cell={.2},labela={\beta}] (v'l) %
    (v'r) edge[cell={.2},labela={\beta'}] (v''l) %
  }
  \end{minipage}
\end{wrapfigure}
We now continue the construction of our playground for $\pi$ by adding
a new dimension. Namely, we view $\Dv$ as the vertical part of a
\emph{(pseudo) double category}~\cite{GrandisPare,GarnerPhD}.  This is
a weakening of Ehresmann's double categories~\cite{Ehresmann:double2},
where one direction has non-strictly associative composition. A pseudo
double category $\D$ consists of a set $\ob (\D)$ of \emph{objects},
shared by a `horizontal' category $\Dh$ and a `vertical' bicategory
$\Dv$. Following Paré~\cite{PareYoneda}, $\Dh$, being a mere category,
has standard notation (normal arrows and $\rond$ for composition),
while the bicategory $\Dv$ earns fancier notation ($\proto$ arrows and
$\vrond$ for composition). $\D$ is furthermore equipped with a set of
\emph{double cells} $\alpha$, which have vertical, resp.\ horizontal,
domain and codomain, denoted by $\domv (\alpha)$, $\codv (\alpha)$,
$\domh (\alpha)$, and $\codh (\alpha)$. We picture this as, e.g.,
$\alpha$ above, where $u = \domh (\alpha)$, $u' = \codh (\alpha)$, $h
= \domv (\alpha)$, and $h' = \codv (\alpha)$. Finally, there are
operations for composing double cells: \emph{horizontal} composition
$\rond$ composes them along a common vertical morphism,
\emph{vertical} composition $\vrond$ composes along horizontal
morphisms. Both vertical compositions (of morphisms and of double
cells) may only be associative up to coherent isomorphism. The full
axiomatisation is given by Garner~\cite{GarnerPhD}, and we here only
mention the \emph{interchange law}, which says that the two ways of
parsing the above diagram coincide: $(\beta' \rond \beta) \vrond
(\alpha' \rond \alpha) = (\beta' \vrond \alpha') \rond (\beta \vrond
\alpha)$.

Returning to our playground for $\pi$, we put
\begin{definition}
  Let $\HH \subseteq \Chatf$ be the identity-on-objects subcategory of
  natural transformations with injective components, except perhaps on
  channels.
\end{definition}

\begin{wrapfigure}[4]{r}{0pt} %
  \begin{minipage}[c]{0.25\linewidth} %
    \vspace*{-1.6em} %
    \begin{equation}%
      \label{eq:cell}%
      \diag(.5,.7){%
      X \& X' \\
      U \& V \\
      Y \& Y' %
    }{%
      (m-1-1) edge[labeld={l}] (m-1-2) %
      (m-2-1) edge[labelo={k}] (m-2-2) %
      (m-3-1) edge[labelu={h}] (m-3-2) %
      (m-1-1) edge[labell={s}] (m-2-1) %
      (m-1-2) edge[labelr={s'}] (m-2-2) %
      (m-3-1) edge[labell={t}] (m-2-1) 
      (m-3-2) edge[labelr={t'}] (m-2-2) %
    } %
    \end{equation} %
  \end{minipage} %
\end{wrapfigure}
For $\Dh$, we take the full subcategory of $\HH$ spanning
  positions\footnote{Injective transformations suffice for CCS, but
    not for $\pi$,  because of channel mobility.}.
Finally, as double cells, $\D$ has commuting diagrams as on the right,
where the vertical cospans are plays and $h,k$, and $l$ are in $\HH$.

\begin{proposition}
   $\D$ forms a pseudo double category.
\end{proposition}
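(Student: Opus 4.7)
The plan is to exhibit $\D$ as a sub pseudo double category of the canonical ambient pseudo double category of cospans on $\Chatf$. This ambient structure has objects the finite presheaves on $\C$, horizontal morphisms arbitrary morphisms of $\Chatf$, vertical morphisms arbitrary cospans, and double cells commuting diagrams of the shape~(\ref{eq:cell}). Since $\Chatf$ has all pushouts (colimits of finite presheaves are computed pointwise in $\set$, preserving finiteness), this assembly is a pseudo double category in the standard way: horizontal composition of cells pastes them along a shared vertical cospan, vertical composition of cospans and of cells is given by chosen pushouts with induced maps provided by their universal property, and the associator, unitors, and interchange law follow from routine universal-property calculations, as detailed in the standard references on pseudo double categories (e.g.\ Grandis--Par\'e, Garner).

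It then suffices to check that $\D$ is stable under the structural operations of this ambient structure. For objects, positions form a stable subclass. For $\Dh$: by definition it is a full subcategory of $\HH$ on positions, and $\HH$ is an identity-on-objects subcategory of $\Chatf$, so $\Dh$ is automatically closed under identities and composition. For vertical morphisms: the identity cospan $X \xto{\id} X \xot{\id} X$ on a position $X$ is the empty composite of moves, hence a play, and composites of plays are plays by their definition as composites of moves. For horizontal composition of double cells, stability is immediate: pasting two cells along a shared vertical play introduces no new vertical or horizontal data.

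The main obstacle is the vertical composition of double cells. Given two vertically stackable cells, one forms the two pushouts $U \vrond V$ and $U' \vrond V'$ of the vertical cospans, and the universal property induces a horizontal morphism $U \vrond V \to U' \vrond V'$. One must check that this induced morphism again lies in $\HH$, i.e., is injective on every $c \in \C$ other than $\star$. This reduces, pointwise in the presheaf category, to the assertion that in $\set$ an induced map between two pushouts is injective whenever the four horizontal components of the stacked cells are injective; the verification is a direct elementwise calculation using the explicit description of pushouts in $\set$, since the glueings on both sides are performed along the \emph{same} shared horizontal morphism and so introduce no parasitic identifications on $c \neq \star$. Once this closure is in hand, the coherence axioms—associativity of vertical composition up to isomorphism, left and right unitors, and the interchange law—are inherited directly from the ambient pseudo double category of cospans.
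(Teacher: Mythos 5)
The paper offers no proof of this proposition, so your proposal can only be judged on its own terms. Your overall strategy---realising $\D$ inside the ambient pseudo double category of cospans on $\Chatf$ (which the paper itself introduces as $\Cospan{\Chatf}$) and checking stability of positions, $\HH$-maps, plays and cells under the structural operations---is the right one, and you correctly locate the only delicate point: closure of the double cells under vertical composition, i.e.\ that the induced map $U \vrond V \to U' \vrond V'$ between the two pushouts again lies in $\HH$.

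However, the justification you give for that step rests on a false general principle. It is \emph{not} true in $\set$ that the induced map between two pushouts is injective whenever the comparison maps are injective: take $Y = \emptyset$, $U = \ens{a}$, $V = \ens{b}$ on one side and $Y' = \ens{y}$, $U' = \ens{c}$, $V' = \ens{c'}$ with $y \mapsto c$, $y \mapsto c'$ on the other; all comparison maps are injective, yet the induced map $\ens{a,b} \to \ens{c = c'}$ is not. Your phrase ``glued along the \emph{same} shared horizontal morphism'' is also misleading: the two composites are glued along $Y$ and $Y'$ respectively, related only by the middle component $h \colon Y \to Y'$ of the cells. The failure mode is exactly that some player $u$ of $U$ \emph{not} lying in the image of the intermediate position $Y \to U$ could have its image $k(u)$ lying in the image of $Y' \to U'$, producing an identification downstairs that has no counterpart upstairs. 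Ruling this out requires a genuine lemma about $\D$ specifically: on objects $c \neq \star$, the squares formed by the legs of the plays and the components of a double cell are pullbacks---equivalently, $\HH$-cells between plays \emph{reflect} membership in the initial and final positions. This does hold, but for structural reasons (positions are empty on move objects, so no identification occurs there; and on the $[n]$'s one uses the fact that a play determines intrinsically which of its players are initial/final, and that commutation with the legs forces $k$ to respect this), not for the general pushout reason you invoke. Your proof as written has a gap precisely at the step you yourself flag as the main obstacle.
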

There is more data to provide and axioms to check to obtain that $\D$
forms a playground. The most serious challenge is to show that the
vertical codomain functor $\codv \colon \DH \to \Dh$ is a
(Grothendieck) fibration~\cite{Jacobs}.  Here, $\DH$ denotes the
category with vertical morphisms as objects, and double cells as
morphisms. The functor $\codv$ maps any play to its initial position
and any double cell to its lower border. Intuitively, the fibration
axiom amounts to the existence, for all plays $Y \xproto{u} X$ and
horizontal morphisms $X' \xto{h} X$, of a universal ($\approx$
maximal) way of restricting $u$ to $X'$, as on the left below:
\begin{mathpar}
    \diag(1,1){%
      |(X)| {Y'} \& |(Y)| {Y} \\ %
      |(U)| {X'} \& |(V)| {X} %
    }{%
      (X) edge[dashed,labela={h'}] (Y) %
      edge[pro,dashed,twol={u'}] (U) %
      (Y) edge[pro,twor={u}] (V) %
      (U) edge[labela={h}] (V) %
      (l) edge[dashed,cell=.3,labela={\alpha}] (r) %
    }
\and     \Diag(.25,1){%
    }{%
      |(U'')| E'' \\ \\ 
      \& |(U')| E' \& \& |(U)| E \\ 
      |(Y'')| p(E'') \\ \\ 
      \& |(Y')| p(E') \& \& |(Y)| p(E)  %
    }{%
      (U') edge[labelb={r}] (U) %
      (Y') edge[labelb={p(r)}] (Y) %
      (U) edge[serif cm-to,fore,shorten <=.3cm,shorten >=.3cm] (Y) %
      (U'') edge[bend left=10,labelar={t}] (U) %
      (Y'') edge[bend left=10,labelar={p(t)}] (Y) %
      (U'') edge[serif cm-to,fore,shorten <=.3cm,shorten >=.3cm] (Y'') 
      (U'') edge[dashed,labelbl={s}] (U') %
      (Y'') edge[labelbl={k}] (Y') %
      (U') edge[serif cm-to,fore,shorten <=.3cm,shorten >=.3cm] (Y') 
    } %
\end{mathpar}
Formally, consider any functor $p \colon \E \to \B$. A morphism $r
\colon E' \to E$ in $\E$ is \emph{cartesian} when, as on the right
above, for all $t \colon E'' \to E$ and $k \colon p(E'') \to p(E')$,
if $p(r) \rond k = p(t)$ then there exists a unique $s \colon E'' \to
E'$ such that $p(s) = k$ and $r \rond s = t$.
\begin{definition} 
  A functor $p \colon \E \to \B$ is a \emph{fibration} iff for all $E 
  \in \E$, any $h \colon B' \to p(E)$ has a cartesian lifting, i.e., a 
  cartesian antecedent by $p$. 
\end{definition} 
Unlike in the CCS case, what the lifting $u'$ should be in our case is
generally not obvious (see Ex.~\ref{ex:restr} and~\ref{ex:restr2} below).


\subsection{Factorisations and fibrations}
Our approach to ensuring that $\codv$ is a fibration works for $\pi$
as well as for CCS, and is much clearer conceptually than our first
proposal~\cite{DBLP:conf/calco/Hirschowitz13}.
\begin{definition}
  A (strong) \emph{factorisation system}~\cite{FK,Joyal:ncatlab:facto}
  on a category $\C$ consists of two subcategories $\LL$ and $\RR$ of
  $\C$, both containing all isomorphisms, such that any morphism in
  $\C$ factors essentially uniquely as $r \rond l$ with $l \in \LL$
  and $r \in \RR$.
\end{definition}
`Essentially unique' here means unique up to unique commuting
isomorphism. 
\begin{example}
  In $\Set$,  surjective and
  injective maps form a factorisation system. 
\end{example}
Our aim is to construct such a factorisation system $(\LL,\RR)$ on
$\Chatf$, such that $\LL$ contains all t-legs of plays, and $\RR$
contains all morphisms in $\HH$.  
The idea is to compute the restriction of $V$ along $h$, as
in~\eqref{eq:cell}, by factoring $t' \rond h$ as $k
\rond t$ with $k \in \RR$ and $t \in \LL$, and then taking the
pullback of $k$ and $s'$. 

Actually, it is enough to demand that $\LL$ contains t-legs of
\emph{seeds}.  Indeed, as is well-known, $\LL$ is always stable under
pushout; and, by construction, t-legs of plays are composites of
pushouts of t-legs of seeds.

\begin{wrapfigure}{r}{0pt}
  \begin{minipage}[t]{0.23\linewidth}
  \vspace*{-.6em}
    \diag{%
    A \& C \\
    B \& D. %
  }{%
    (m-1-1) edge[labelu={u}] (m-1-2) %
    edge[labell={f}] (m-2-1) %
    (m-2-1) edge[labeld={v}] (m-2-2) %
    (m-1-2) edge[labelr={g}] (m-2-2) %
    (m-2-1) edge[dashed,labelal={h}] (m-1-2) %
  }
\end{minipage}
\end{wrapfigure}
We rely on Bousfield's
construction~\cite{Bousfield,Joyal:ncatlab:facto} of factorisation
systems from a generating class of maps in $\LL$ (the \emph{generating
  cofibrations}).  For any morphism $f \colon A \to B$ and $g \colon
C \to D$, let $f \ortho g$ iff for all commuting squares as on the right,
there is a unique lifting $h$ making both triangles commute.  This
extends in the obvious way to classes of morphisms, which we denote by
$\LL \ortho \RR$. For all classes $\LL$ and $\RR$ of morphisms, let
$\botright{\LL} = \ens{g \aalt \LL \ortho \ens{g}}$ and $\botleft{\RR} =
\ens{f \aalt \ens{f} \ortho \RR}$.
\begin{theorem}[Bousfield]
  For any class $\TT$ of morphisms in any locally presentable category
  $\E$, the pair $(\botrightleft{\TT},\botright{\TT})$ forms a
  factorisation system.
\end{theorem}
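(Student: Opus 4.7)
The goal is to establish three things: (i) both $\botrightleft{\TT}$ and $\botright{\TT}$ contain all isomorphisms and are closed under composition; (ii) every morphism $f$ of $\E$ admits a factorisation $f = r \rond l$ with $l \in \botrightleft{\TT}$ and $r \in \botright{\TT}$; (iii) this factorisation is essentially unique. I would tackle these in the order (i), (iii), (ii).

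Parts (i) and (iii) are formal consequences of the orthogonality relation $\ortho$. For (i), closure of $\botright{\TT}$ under composition and presence of isos follow from diagram-chasing: given a square from $f \in \TT$ into $g' \rond g$ with $g,g' \in \botright{\TT}$, one first fills against $g'$, obtaining a unique lift, then against $g$; uniqueness on each side yields uniqueness of the composite lift. Symmetric arguments handle $\botrightleft{\TT}$. For (iii), suppose $r \rond l = r' \rond l'$ with $l,l' \in \botrightleft{\TT}$ and $r,r' \in \botright{\TT}$: the square with $l$ on the left and $r'$ on the right admits a unique filler $h$ making both triangles commute, and symmetrically a unique filler in the other direction; their composites are lifts of identity squares, hence must be the identities by uniqueness, so $h$ is the desired iso between the middle objects.

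The substantive step is (ii), the existence of factorisations. The strategy is a transfinite small-object style construction, adapted to the \emph{orthogonal} (unique-filler) setting rather than the weak one. First, using local presentability, I would reduce the class $\TT$ to an essentially small generating subclass: there is a regular cardinal $\kappa$ such that every object of $\E$ is a $\kappa$-filtered colimit of $\kappa$-presentable objects, and one may replace $\TT$ by the set of its $\kappa$-presentable representatives without changing $\botright{\TT}$. Starting from $f \colon X \to Y$, I would then build a transfinite sequence $X = Z_0 \to Z_1 \to \cdots \to Z_\alpha \to \cdots$ over $Y$: at successor stages, form $Z_{\alpha+1}$ from $Z_\alpha$ by (a) attaching, via pushout, one copy of the codomain of each $t \in \TT$ for each commuting square from $t$ to $Z_\alpha \to Y$ (to provide lifts), and (b) coequalising any pair of competing lifts (to enforce uniqueness); at limit stages, take colimits. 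By the presentability bound, this sequence stabilises at some ordinal $\lambda$, yielding $l \colon X \to Z_\lambda$ and $r \colon Z_\lambda \to Y$ with $r \rond l = f$. By construction, $r \in \botright{\TT}$: any square from $t \in \TT$ to $r$ factors through some $Z_\alpha$ by presentability of the domain of $t$, and both existence and uniqueness of the lift at $Z_\lambda$ follow from steps (a) and (b). That $l \in \botrightleft{\TT} = \botleft{(\botright{\TT})}$ is then a formal consequence of the factorisation property together with (iii) applied to suitable squares, i.e.~the standard ``retract'' argument.

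The main obstacle is the coequalising step (b), which distinguishes this construction from the usual small-object argument producing only weak factorisation systems. One must verify that attaching coequalisers does not destroy the lifts already provided at stage (a), and that the cardinality bookkeeping is uniform across (a) and (b) so that the process indeed stabilises. Both issues are handled by choosing $\kappa$ large enough that all domains and codomains of maps in the chosen generating subset of $\TT$ are $\kappa$-presentable, and by interleaving (a) and (b) within each successor step rather than sequentially.
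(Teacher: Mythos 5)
The paper does not actually prove this statement: it is quoted verbatim from the literature (the surrounding text cites Bousfield and Joyal for it), and the only thing the paper does with it is apply it to the \emph{set} $\TT$ of t-legs of seeds. So there is no in-paper proof to compare against; measured against the cited source, your construction is the right one. Steps (i) and (iii) are indeed formal orthogonality chases, and your step (ii) --- a transfinite small-object argument in which each successor stage both attaches codomains of maps in $\TT$ by pushout to create fillers and coequalises competing fillers to force uniqueness --- is precisely Bousfield's original construction (equivalently, one runs the ordinary small object argument against $\TT$ enlarged by the codiagonals $B +_A B \to B$ of its members, and checks the resulting weak factorisation system is orthogonal). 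The convergence and orthogonality-at-the-colimit arguments you sketch are the standard ones and go through.

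There is one genuine gap, located exactly where you try to honour the words ``any class'' in the statement: the claim that one may replace a proper class $\TT$ by the set of its $\kappa$-presentable representatives \emph{without changing} $\botright{\TT}$ is unjustified and false in general. For a proper class there is no single regular cardinal bounding the presentability ranks of all domains and codomains of members of $\TT$, and discarding the large members can strictly enlarge $\botright{\TT}$; indeed, for proper classes the existence of the factorisation system is not a theorem of ZFC (it is tied to reflectivity of orthogonality classes, hence to weak Vop\v{e}nka-type principles). The honest statement --- and the only one the paper uses, or that Bousfield proves --- is for $\TT$ a \emph{set}, in which case no replacement is needed: one simply chooses $\kappa$ large enough that all (co)domains of maps in $\TT$ are $\kappa$-presentable, and your argument is then complete. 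A second, minor, imprecision: showing that the constructed $l$ lies in $\botrightleft{\TT}$ is not the retract argument but the complementary ``cellularity'' step --- one checks that $\botleft{\RR}$ is closed under pushout, transfinite composition and codiagonals, so that the transfinite composite $l$ lands in it; the retract argument runs in the other direction, showing every member of $\botrightleft{\TT}$ is isomorphic to such an $l$. As written, invoking ``the factorisation property'' at that point is circular.
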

In the setting of the theorem, one may construct the double category $\D_{\LL,\RR}$,
with $\LL = \botrightleft{\TT}$ and $\RR = \botright{\TT}$, with the same objects 
as  $\E$, and such that
\begin{itemize}
\item vertical morphisms $X \to Y$ are cospans $X \xto{f} U \xot{l} Y$ with $l \in \LL$,
\item horizontal morphisms are morphisms in $\RR$, and
\item double cells are diagrams like~\eqref{eq:cell} with $r' \in \RR$.
\end{itemize}
The theorem yields:
\begin{proposition}
  The functor $\codv \colon (\D_{\LL,\RR})_H \to (\D_{\LL,\RR})_h$ is a fibration.
\end{proposition}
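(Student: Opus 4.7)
The plan is to explicitly construct a cartesian lifting following the recipe sketched in the paragraph preceding the proposition: factor, then pull back. Given a vertical morphism $u' = (X' \xto{s'} V \xot{t'} Y')$ (with $t' \in \LL$) and a horizontal morphism $h \colon Y \to Y'$ in $\RR$, I first factor the composite $t' \rond h$ as $k \rond t$ with $t \colon Y \to U$ in $\LL$ and $k \colon U \to V$ in $\RR$, and then form the pullback of $k$ along $s'$, yielding $X \xto{s} U$ and $X \xto{l} X'$ with $s' \rond l = k \rond s$. Stability of $\RR$ under pullback, a standard consequence of orthogonality, ensures $l \in \RR$. The cospan $u = (X \xto{s} U \xot{t} Y)$ is then a valid vertical morphism (since $t \in \LL$), and the diagram with horizontal maps $l, k, h$ and vertical cospans $u, u'$ constitutes a valid double cell $\alpha \colon u \to u'$: both squares commute by construction and all three horizontal maps lie in $\RR$.

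For cartesianness, given any further double cell $\beta \colon u'' \to u'$ (with $u'' = X'' \xto{s''} W \xot{t''} Y''$, middle $n \colon W \to V$, top $l'' \colon X'' \to X'$) and a factorisation $\codv(\beta) = h \rond m$ with $m \in \RR$, I build a unique factorising double cell $\gamma \colon u'' \to u$ over $m$. The middle component $n' \colon W \to U$ is obtained as the diagonal fill for the square
\[
k \rond (t \rond m) \;=\; t' \rond h \rond m \;=\; t' \rond \codv(\beta) \;=\; n \rond t'',
\]
which exists and is unique since $t'' \in \LL$ and $k \in \RR$. The top component $l' \colon X'' \to X$ is then obtained by the universal property of the pullback defining $X$, using the identity $s' \rond l'' = n \rond s'' = k \rond n' \rond s''$; it satisfies $l \rond l' = l''$ and $s \rond l' = n' \rond s''$. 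Horizontal composition $\alpha \rond \gamma$ recovers $\beta$ componentwise ($h \rond m$, $k \rond n'$, $l \rond l'$), and uniqueness is forced at each step by the universal properties invoked.

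The main subtlety I expect is verifying that $n'$ and $l'$ genuinely live in $\RR$, so that $\gamma$ is a bona fide double cell. This relies on the right-cancellation property of the right class of a factorisation system: if $k \rond n' = n$ with $k, n \in \RR$, then factoring $n' = r \rond \ell$ with $\ell \in \LL$ and $r \in \RR$ yields two factorisations of $n$, forcing $\ell$ to be iso and hence $n' \in \RR$; applying the same argument to $l \rond l' = l''$ gives $l' \in \RR$. Once this is checked, $\alpha$ is cartesian, and since the construction applies to arbitrary $u'$ and $h$, $\codv$ is a fibration.
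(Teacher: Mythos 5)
Your proof is correct and follows essentially the same route as the paper's: factor the composite of the t-leg with the horizontal map as $\LL$ followed by $\RR$, pull back, use stability of $\RR$ under pullback, derive cartesianness from $\LL\ortho\RR$ plus the universal property of the pullback, and conclude membership of the induced maps in $\RR$ by the cancellation property (which the paper cites as ``left cancellation'' and you call right-cancellation, but it is the same fact, and your sketch of its proof via essential uniqueness of factorisations is fine).
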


{\it Proof.\ }
  Consider any vertical morphism  $X \xto{f} U \xot{l} Y$ and $r \colon Y' \to Y$ in $\RR$.
  \begin{wrapfigure}[6]{r}{0pt} %
  \begin{minipage}[c]{0.18\linewidth} %
    \vspace*{-.3em} %
    \Diag(.5,.7){%
      \pbk{m-2-1}{m-1-1}{m-1-2} %
    }{%
      X' \& X \\
      U' \& U \\
      Y' \& Y %
    }{%
      (m-1-1) edge[dashed,labeld={r''}] (m-1-2) %
      (m-2-1) edge[dashed,labelo={r'}] (m-2-2) %
      (m-3-1) edge[labelu={r}] (m-3-2) %
      (m-1-1) edge[dashed,labell={f'}] (m-2-1) %
      (m-1-2) edge[labelr={f}] (m-2-2) %
      (m-3-1) edge[dashed,labell={l'}] (m-2-1) 
      (m-3-2) edge[labelr={l}] (m-2-2) %
    } %
  \end{minipage} %
\end{wrapfigure}
We construct the restriction of $(f,l)$ along $r$ by factoring $l
\rond r$ as $r' \rond l'$, and then taking the pullback, as on the
right.  It is well-known that in any factorisation system, $\RR$ is
stable under pullback, hence $r'' \in \RR$. The universal property of
this restriction follows from the other well-known general fact that
for all $l \in \LL$ and $r \in \RR$, we have $l \ortho r$. Indeed,
consider as in Fig.~\ref{fig:cartesian} any other vertical morphism
$X'' \xto{f''} U'' \xot{l''} Y''$ and morphism $(t,t',t'')$ to $U$,
together with a morphism $s \colon Y'' \to Y'$ such that $r \rond s =
t$.  The lifting property $l'' \ortho r'$ and the universal property
of pullback give the unique $s'$ and $s''$ making the diagram of
Fig.~\ref{fig:cartesian} commute. Finally, $s'$ and $s''$ are in
$\RR$, by the general fact that $\RR$ has the \emph{left cancellation}
property~\cite{Joyal:ncatlab:facto}: for all composable $u$ and $v$,
$vu \in \RR$ and $v \in \RR$ implies $u \in \RR$.  \qed
\begin{figure}[t]
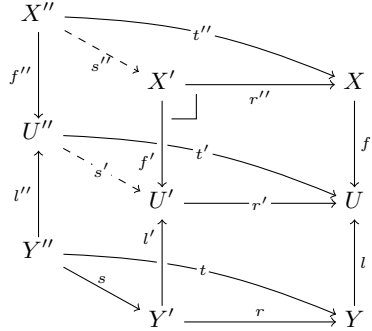

  \centering
  \Diag(.25,1){%
    \pbk{U'}{X'}{X} %
  }{%
    |(X'')| X'' \\ \\
    \& |(X')| X' \& \& |(X)| X \\
    |(U'')| U'' \\ \\
    \& |(U')| U' \& \& |(U)| U \\
    |(Y'')| Y'' \\ \\
    \& |(Y')| Y' \& \& |(Y)| Y %
  }{%
    (X') edge[labeld={r''}] (X) %
    (U') edge[labelo={r'}] (U) %
    (Y') edge[labelu={r}] (Y) %
    (X) edge[labelr={f}] (U) %
    (Y) edge[labelr={l}] (U) %
    (X'') edge[bend left=10,labelo={t''}] (X) %
    (U'') edge[bend left=10,labelo={t'}] (U) %
    (Y'') edge[bend left=10,labelo={t}] (Y) %
    (X'') edge[labell={f''}] (U'') %
    (Y'') edge[labell={l''}] (U'') 
    (X'') edge[dashed,labeld={s''}] (X') %
    (U'') edge[dashed,labelo={s'}] (U') %
    (Y'') edge[labelu={s}] (Y') %
    (X') edge[fore,labellat={f'}{.7}] (U') %
    (Y') edge[fore,labellat={l'}{.8}] (U') 
  } %
  \caption{Cartesianness of lifting}
\label{fig:cartesian}
\end{figure}

\begin{corollary}
  The functor $\codv \colon \DH \to \Dh$ is a fibration.
\end{corollary}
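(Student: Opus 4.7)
The plan is to apply the preceding Proposition to a suitable factorisation system on $\Chatf$ and then descend it to $\D$. Take $\TT$ to be the collection of t-legs of seeds. Since $\Chatf$ is locally presentable, Bousfield's theorem provides a factorisation system $(\LL,\RR) = (\botrightleft{\TT},\botright{\TT})$, and the Proposition yields that $\codv \colon (\D_{\LL,\RR})_H \to (\D_{\LL,\RR})_h$ is a fibration. It then suffices to exhibit $\D$ as a sub-pseudo-double-category of $\D_{\LL,\RR}$ that is closed under the cartesian liftings constructed in the proof of the Proposition.

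Two routine checks yield the inclusion. On the vertical side, every t-leg of a play lies in $\LL$: the left class of any factorisation system is closed under pushouts and composition, and t-legs of plays are by construction composites of pushouts of t-legs of seeds (as indicated in the remark preceding Bousfield's theorem). On the horizontal side, $\HH \subseteq \RR$: one checks by case analysis over the seeds listed earlier that every transformation injective away from $\star$ has the right lifting property against each t-leg of a seed, using the explicit shape of these t-legs together with the definition of $\HH$.

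The main obstacle is showing that $\D$ is closed under cartesian liftings. Given a play $X \xto{f} U \xot{l} Y$ in $\D$ and a horizontal morphism $r \colon Y' \to Y$ between positions, the construction in the Proposition factors $l \rond r$ as $r' \rond l'$ with $l' \in \LL$ and then takes a pullback. One must check that $l'$ is concretely a t-leg of a play, not merely an arbitrary element of $\LL$, and that the induced horizontal maps landing in the new apex $U'$ and the new initial position $X'$ lie in $\HH$ between positions. For the first point, I would argue by induction on a decomposition of $l$ into pushouts of seed t-legs: restricting each seed in turn along the relevant portion of $r$ either produces an identity (when $r$ does not meet the seed) or a pushout of a seed (when it does), and reassembling yields a play. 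For the second point, the left cancellation property of $\RR$ already used in the Proposition's proof, together with the fact that pullbacks of positions along $\HH$-morphisms remain positions, suffices. Once this closure is in place, the fibration property transfers from $\D_{\LL,\RR}$ to $\D$, giving the Corollary.
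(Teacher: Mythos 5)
Your overall strategy is the one the paper uses: apply Bousfield's theorem to the class $\TT$ of t-legs of seeds, invoke the Proposition for $\D_{\LL,\RR}$, and then check that $\D$ sits inside $\D_{\LL,\RR}$ and is closed under the cartesian liftings (t-legs of plays lie in $\LL$ because $\LL$ is closed under pushout and composition; $\HH \subseteq \RR$; the horizontal maps produced by the lifting stay in $\HH$ via stability under pullback and left cancellation). Those checks match the paper's.

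There is, however, one genuine flaw in your first step: $\Chatf$ is \emph{not} locally presentable. Local presentability requires cocompleteness, and the category of finite presheaves does not have infinite coproducts (an infinite coproduct of nonempty finite presheaves is not finite); more to the point, Bousfield's small-object construction of the factorisation uses transfinite compositions that are unavailable in $\Chatf$. So you cannot apply the theorem to $\Chatf$ directly. The paper instead applies Bousfield's theorem in the full presheaf category $\Chat$, which \emph{is} locally presentable, works there temporarily, and then verifies separately that the factorisations of the particular morphisms needed (namely $l \rond r$ for $l$ a t-leg of a play and $r \in \HH$) land back in $\Chatf$. This finiteness check is an extra obligation your argument omits and cannot avoid: without it the restricted play $U'$ might a priori fail to be a finite presheaf, i.e.\ fail to be a play at all. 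Your closing discussion of why $l'$ is concretely a t-leg of a play (by induction on a decomposition of $l$ into pushouts of seed t-legs) would in fact supply this finiteness as a by-product, but as written it rests on a factorisation system you have not legitimately constructed. Repair the construction by moving to $\Chat$ first, and the rest of your argument goes through essentially as in the paper.
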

\begin{proof}
  We apply the theorem for $\TT$ the set of all t-legs of seeds to
  obtain a factorisation system $(\LL,\RR)$ on $\Chat$.  We
  temporarily work in $\Chat$, and then show that the needed
  factorisations remain in $\Chatf$.  Finally, we verify that
  $\HH$ is contained in $\RR$, is stable under pullback, and has the
  left cancellation property.\qed
\end{proof}
Let us briefly explain what factorisations do. First, for all
morphisms $r \colon U \to V$ in $\Chat$, $r \in \RR$ iff $t \ortho r$
for all t-legs $t$ of seeds. But for any seed $X \xto{s} M \xot{t}
Y$, $M$ is a representable presheaf. By Yoneda, $t \ortho r$ intuitively means
that for all $x \in V(M)$, if $Y$ is already present in $U$, then the
whole of $M$ is.  Otherwise said, morphisms in $r$ may not `grow' new
moves from initial positions.  Consequently, in~\eqref{eq:cell}, the
first factor $t$ of any $t' \rond h \colon Y \to V$ will perform all
possible moves from $Y$; and the second factor will then map the
result to $V$.
\begin{example}\label{ex:restr}
  Consider the synchronisation of Fig.~\ref{fig:tau}. Let $V$ be the
  synchronisation obtained by identifying $\alpha$ and $\beta$, so that the
  received name is already known to $y$. Consider the restriction of
  $V$ to just $y$. By Ex.~\ref{ex:interface}, an input move (in the
  absence of a corresponding output move) cannot receive an already
  known channel. Thus, the restriction of $V$ to $y$ is just the input
  seed.
\end{example}
\begin{example}\label{ex:restr2}
  Consider again the synchronisation of Fig.~\ref{fig:tau}, say $X
  \xto{f} U \xot{l} Y$, and the position $[3]+[1]$ consisting of a
  ternary player $x_0$ and a unary player $y_0$, not sharing any
  channel.  Consider the horizontal map $r \colon [3] + [1] \to Y$
  defined by $x_0 \mapsto x$ and $y_0 \mapsto y$.  The factorisation
  of $l \rond r$ as $r' \rond l'$ yields a play where $x_0$ does an
  $o_{3,2,3}$ move and $y_0$ does an $\iota_{1,1}$ move (the order is
  irrelevant). Thus, restrictions of moves may be plays of length
  strictly greater than one.
\end{example}
We at last obtain:
\begin{theorem}
  $\D$ forms a playground.
\end{theorem}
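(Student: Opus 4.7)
The plan is to verify all the axioms of playgrounds as defined in~\cite{DBLP:conf/calco/Hirschowitz13}, using the fibration result established above as the main technical hammer. Since the paper already flags the fibration axiom as the most serious challenge, my strategy is to treat the remaining axioms by direct case analysis on seeds, exploiting the explicit description of $\C$ via the equations of Fig.~\ref{fig:equationsC}.

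First, I would organise the verification by axiom groups. The first group concerns the pseudo double category structure and basic dimensional data: the existence of a functor $\Dim \colon \Dh \to \Nat$ assigning to each position the total number of its players (which is well-defined on $\HH$ because horizontal maps preserve players up to injectivity on $[n]$-components), and the fact that seeds decompose uniquely into an initial position, a final position, and an interface. These are immediate from the explicit description of seeds listed above (fork, $\tau$, input, output, tick, $\nu$), together with Def.~\ref{def:interface}.

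Second, I would address the axioms governing moves, views, and their interaction with the fibration. Here the work is to exhibit, for each seed $M$, its associated \emph{views}, i.e., restrictions to a single player of the final position. A direct inspection of the six families of seeds shows that every player in $Y$ sees either a view seed ($\paralp$, $\pararp$, $\outmcd$, $\inna$, $\tickp$, $\nup$) or the identity, so the views axiom reduces to checking finitely many diagrams. The key compatibility is that the restriction of a move $u$ along the inclusion of a single player is exactly the corresponding view, which is precisely an instance of the fibration axiom combined with the factorization system: factor $l \rond (\text{player inclusion})$ as $r' \rond l'$ with $l' \in \LL$, and check in each case that $l'$ recovers the expected view seed. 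Example~\ref{ex:restr} illustrates the subtle point for the input view under synchronisation; Example~\ref{ex:restr2} shows that restrictions of moves may be nontrivial plays, which is exactly what the playground axioms expect.

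Third, I would discharge the remaining axioms, which are roughly combinatorial: that horizontal morphisms preserve and reflect the relevant structure, that vertical composition interacts correctly with $\codv$ (coherence of the pseudo double category, essentially automatic from $\Cospan{\Chatf}$), and various finiteness/locality properties. Finiteness is ensured by working in $\Chatf$; locality follows because each seed has a bounded interface. The left cancellation property of $\RR$ and stability of $\HH$ under pullback in $\Chatf$, already invoked in the corollary establishing the fibration, are what makes all these checks go through uniformly.

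The main obstacle I expect is not any single axiom but rather the bookkeeping for synchronisation seeds $\taunamcd$: unlike in the CCS case, channel mobility means that an input view may be extended in the full move by gaining knowledge of a new channel, so the interface of the input view is strictly smaller than the restriction of the synchronisation to the receiver. Handling this asymmetry cleanly requires care in defining the views axiom, and the cleanest way to verify it is to observe that the factorisation of $l \rond r$ (for $r$ the inclusion of the receiver) produces exactly the input seed of Ex.~\ref{ex:restr}, together with an $\RR$-morphism identifying $\beta$ with a pre-existing channel when relevant. Once this case is settled, the other seeds are analogous to, and simpler than, the CCS constructions of~\cite{DBLP:conf/calco/Hirschowitz13}.
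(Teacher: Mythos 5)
Your proposal follows essentially the same route as the paper: the paper likewise treats the fibration axiom --- established via the Bousfield factorisation system in the preceding subsection --- as the only serious obstacle, and states the theorem with the remaining playground axioms left as routine verification by inspection of the six families of seeds. Your handling of the views of the synchronisation seed through the factorisation of $l \rond r$, as in Examples~\ref{ex:restr} and~\ref{ex:restr2}, matches the mechanism the paper itself uses, so the comparison reveals no divergence in strategy.
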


\subsection{Innocent Strategies}\label{sec:strats}
Following our previous work~\cite{DBLP:conf/calco/Hirschowitz13}, we
now associate to each object, i.e., position 
$X$, its category of
\emph{strategies} $\SS_X$. 
Consider first the most naive notion of strategy 
\begin{wrapfigure}[5]{r}{0pt}
  \begin{minipage}[t]{0.14\linewidth}
  \vspace*{-.6em}
    \Diag(.4,.4){%
    }{%
     |(Z)| Y' \& |(Y')| Y' \\
     |(Y)| Y \&  \\
     |(X)| X \& |(X')| X %
    }{%
      (Z) edge[pro,labell={w}] (Y) %
      (Y) edge[pro,labell={u}] (X) %
      (Y') edge[pro,twor={u'}] (X') %
      (X) edge[identity] (X') %
      (Z) edge[identity] (Y') %
      (Y) edge[cell={0.1},labela={\iso}] (r) %
    }
\end{minipage}
\end{wrapfigure}
over $X$.  
In a
playground, the analogue of the poset of plays with prefix order is
given by the category $\P(X)$ with plays $u \colon Y \proto X$ as
objects and double cells as on the right as morphisms $u \to u'$.  Let
the category $\Beh{X}$ of \emph{behaviours} on $X$ be
$\FPsh{\P(X)}$. Behaviours do not yield a satisfactory notion of
strategy:
\begin{example}
  Consider the position $X$ consisting of three players $x,y,z$
  sharing a channel $a$. Let $u_{x,y}$ denote the play where $x$ sends
  $a$ on $a$, which is received by $y$; let similarly $u_{x,z}$ denote
  the play where $x$ sends $a$ on $a$, which is received by $z$. Let,
  finally, $i_z$ denote the play where $z$ inputs on $a$. One may
  define a behaviour $B$ mapping $u_{x,y}$ and $i_z$ to a singleton,
  and $u_{x,z}$ to the empty set. Because $u_{x,y}$ is accepted, $x$
  accepts to send $a$ on $a$. Because $i_z$ is accepted, $z$ accepts
  to receive on $a$. The problem is that $B$ rejecting $u_{x,z}$
  amounts to $x$ or $z$ choosing their partners for synchronising,
  e.g., $x$ accepts to send $a$ on $a$ \emph{only to players other
    than~$z$}.
\end{example}

\begin{wrapfigure}[6]{r}{0pt}
  \begin{minipage}[t]{0.2\linewidth}
    \vspace*{-.5em}
    \Diag(.3,.1){%
      \node[inner sep=2pt] (XX) at ($(m-3-2.south) - (0,.2)$) {$X$} ;
      \path[->,draw] %
      (X) edge[labelbl={x}] (XX) %
      (X') edge[labelbr={x}] (XX) %
      ; %
    }{%
     |(Z)| [m'] \& \& |(Y')| [m'] \\ 
     |(Y)| [m] \& \&  \\ 
     |(X)| [n] \& {\ } \& |(X')| [n]
    }{%
      (Z) edge[pro,labell={w}] (Y) %
      (Y) edge[pro,labell={v}] (X) %
      (Y') edge[pro,twor={v'}] (X') %
      (X) edge[identity] (X') %
      (Z) edge[identity] (Y') %
      (Y) edge[cell={0.1},labela={\iso},labelb={\alpha}] (r) %
    }
\end{minipage}
\end{wrapfigure}
We want to rule out this kind of behaviour from our model, and our
solution is innocence.  Let \emph{basic} seeds be all seeds of the
shape $\inna$, $\outnab$, $\nun$, $\tickn$, $\forkln$, or $\forkrn$,
for $a,b \in n$. Intuitively, basic seeds follow exactly one player.
Let \emph{views} be composites of basic seeds in $\Dv$.  We now
replace our base $\P(X)$ with $\V_X$, whose objects are pairs $(v,x)$
of a view $v \colon [m] \proto [n]$ and a horizontal morphism $x
\colon [n] \to X$, i.e., by Yoneda, of a player of $X$ and a view from it.
Morphisms $v \to v'$ are cells $\alpha$ as on the right.
\begin{definition}
  Let the category $\SS_X$ of \emph{strategies} over $X$ be $\OPsh{\V_X}$.
\end{definition}
\begin{remark}
  We restrict to presheaves of finite ordinals (as opposed to
  finite sets). There is an essentially surjective embedding
  $\OPsh{\V_X} \into \FPsh{\V_X}$, so we do not really lose any
  strategy in the process, only some completeness properties. On the
  other hand, we gain the syntactic characterisation used in the next
  section.
\end{remark}

\begin{wrapfigure}[6]{r}{0pt}
  \begin{minipage}[t]{0.2\linewidth}
    \vspace*{-.5em}
    \Diag(.3,.3){%
      \node[inner sep=2pt] (XX) at ($(m-3-2.south) - (0,.3)$) {$X$} ;
      \path[->,draw] %
      (X) edge[labelbl={h}] (XX) %
      (X') edge[labelbr={h'}] (XX) %
      ; %
    }{%
     |(Z)| T \& \& |(Y')| Z' \\ 
     |(Y)| Z \& \&  \\ 
     |(X)| Y \& {\ } \& |(X')| Y'
    }{%
      (Z) edge[pro,labell={w}] (Y) %
      (Y) edge[pro,labell={u}] (X) %
      (Y') edge[pro,twor={u'}] (X') %
      (X) edge[labela={r}] (X') %
      (Z) edge[labela={s}] (Y') %
      (Y) edge[cell={0.2},labela={\alpha}] (r) %
    }
\end{minipage}
\end{wrapfigure}
To relate strategies and behaviours, consider the category $\P_X$ with
as objects pairs $(u,h)$ of a play $u \colon Z \proto Y$ and a
horizontal morphism $h \colon Y \to X$, and as morphisms $(u,h) \to
(u',h')$ all diagrams as on the right.  This category contains both
$\V_X$ and $\P(X)$ in obvious ways, and it furthermore allows to
describe the views $(v,x)$ of a general play $u'$ by taking $h' =
\id_X$%
\footnote{There is a small problem, however: morphisms should only
  describe how $u$ maps to $u'$, not $w$. We actually consider a
  quotient of morphisms to rectify this.}%
. So our morphisms account both for prefix inclusion, and for
`spatial' inclusion, i.e., inclusion of a play into a play on a bigger
position.

Right Kan extension and restriction along $\op{\V_X} \into
\op{\P_X} \otni \op{\P(X)}$ induce a functor $\SS_X \to
\Beh{X}$. Intuitively, this functor maps any strategy $S$ to the
behaviour accepting a play $u$ iff $S$ accepts all views of $u$. This
also allows to view strategies as sheaves~\cite{MM} for a certain
Grothendieck topology on $\P_X$, as explained in previous
papers~\cite{2011arXiv1109.4356H}. Intuitively, strategies provide
(locally determined) behaviours for all subpositions of $X$. Such
local `behaviour' may become irrelevant when passing to the
globally-defined behaviours. In particular, $\SS_X \to \Beh{X}$ is
neither injective on objects, nor full, nor faithful.
\begin{example}
  If two strategies differ, but are both empty on the views of some
  player, then both are mapped to the empty behaviour.
\end{example}

\section{Bridging the gap with $\pi$}\label{sec:pi}
\subsection{Syntax and transition system for strategies}
One of the main results about
playgrounds~\cite{DBLP:conf/calco/Hirschowitz13} entails that
strategies over $\D$ are entirely described by the following typing rules
\begin{mathpar}
\inferrule{\ldots \ n_B \vdash S_B \ \ldots \ {(\forall B \colon [n_B] \to [n])} 
}{
n \vdashdefinite \langle (S_B)_{B \in \MMMB_n} \rangle
}
\and
\inferrule{\ldots \  n \vdashdefinite D_i \  \ldots \ (\forall i \in m)}{
n \vdash \oplus_{i \in m} D_i}~(m \in \Nat),
\end{mathpar}%
where $\MMMB_n$ is the set of basic seeds from $[n]$ as defined
above. The rules feature two kinds of judgements, $\vdash$ for plain
strategies, and $\vdashdefinite$ for \emph{definite} strategies,
intuitively those with exactly one initial state.
\begin{remark}
  The sum $\oplus$ is not commutative (although it is up to fair
  testing equivalence).
\end{remark}
\begin{theorem}[\cite{DBLP:conf/calco/Hirschowitz13}]\label{thm:syntaxstrat}
  Strategies over $[n]$ are in bijection with possibly infinite terms
  in context $n$.

  Furthermore, giving a strategy over any position $X$ amounts
  to giving a strategy over $[n]$ for each $n$-ary player of $X$.
\end{theorem}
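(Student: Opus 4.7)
The plan is to prove both claims by analysing the structure of the indexing categories $\V_X$ and $\V_{[n]}$, and then reading off the coinductive structure of strategies.

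For the second part, I would first show that $\V_X$ decomposes as a coproduct $\coprod_{x} \V_{[n_x]}$ indexed by the players $x \colon [n_x] \to X$ of $X$ (by Yoneda, such a player is just an element of $X([n_x])$). By inspection of the definition of morphisms in $\V_X$, the underlying horizontal morphism to $X$ is preserved, so there are no morphisms between views attached to distinct players; identifying each component with $\V_{[n_x]}$ gives the coproduct decomposition. Since $\OPsh{-}$ turns coproducts into products, this yields $\SS_X \simeq \prod_x \SS_{[n_x]}$, which is exactly the second claim.

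For the first part, the key lemma is a unique-decomposition result: every view from $[n]$ factors essentially uniquely as a vertical composite of basic seeds in $\Dv$. Granted this, $\V_{[n]}$ is tree-like: the root is the trivial view $(\id_{[n]}, \id_{[n]})$, and each basic seed $B \in \MMMB_n$ opens up a subtree which, modulo a shift of the distinguished player, is isomorphic to $\V_{[n_B]}$. Morphisms in $\V_{[n]}$ are generated by these one-step extensions, so a presheaf $S \in \OPsh{\V_{[n]}}$ is entirely determined by the finite ordinal $m = S(\id_{[n]}, \id_{[n]})$ of initial states together with, for each $B$, a presheaf on the subtree above $B$ fibered over $[m]$ via the restriction map.

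From here the bijection is coinductive: the fibering over $[m]$ splits each subtree-presheaf as an $m$-fold disjoint union of presheaves, and each piece is precisely a strategy in $\SS_{[n_B]}$. Reading this off reconstructs the grammar $\oplus_{i \in m} \langle (S_{i,B})_{B \in \MMMB_n} \rangle$ with $S_{i,B} \in \SS_{[n_B]}$ defined recursively, and unfolding coinductively yields the bijection with possibly infinite terms in context $n$. The main obstacle is the unique-decomposition lemma for views: one must verify that distinct basic-seed composites yield non-isomorphic plays in $\Dv$, which rests on the rigid combinatorial structure of the representable diagrams underlying each basic seed.
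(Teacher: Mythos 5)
Your outline is sound, but it takes a genuinely different route from the paper, which offers no proof at all for this statement: the theorem is imported wholesale from the playgrounds paper of Hirschowitz, where it is a consequence of the \emph{general} playground axioms, instantiated here once $\D$ is shown to be a playground. Your proposal inlines that general argument for this particular $\D$. The two halves are the right ones: morphisms of $\V_X$ fix the underlying player $x \colon [n] \to X$ (the bottom edge of the defining square is an identity), so $\V_X \simeq \coprod_x \V_{[n_x]}$ and $\OPsh{-}$ turns this into $\SS_X \simeq \prod_x \SS_{[n_x]}$; and the tree-likeness of $\V_{[n]}$ yields the coalgebraic reading of a strategy as an ordinal $m$ of initial states together with, for each $i \in m$ and each basic seed $B$, a strategy over $[n_B]$. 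What the paper's modular route buys is precisely that your self-declared ``main obstacle'' --- essentially unique decomposition of views into basic seeds, plus the identification of the subcategory of $\V_{[n]}$ above $B$ with $\V_{[n_B]}$ --- is not an ad hoc lemma but is the content of the playground axioms governing views and basic moves, discharged once and for all when proving that $\D$ forms a playground; your direct route still owes exactly that combinatorial verification, so as written the crux remains open. Two further points you should make explicit. First, strategies are valued in $\ford$ (finite ordinals and \emph{monotone} maps), so the restriction map $S(B) \to S(\id_{[n]}) = [m]$ has fibres that are intervals, hence again finite ordinals; this is what makes your ``$m$-fold disjoint union'' an \emph{ordered} sum, matching the paper's remark that $\oplus$ is non-commutative. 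Second, the one-step decomposition only gives a coalgebra structure on the family $(\SS_{[n]})_n$; to conclude a bijection with \emph{possibly infinite} terms you must also argue finality, i.e., that a presheaf on the tree-like $\V_{[n]}$ is determined by its compatible family of finite-depth truncations, which holds here because every object of $\V_{[n]}$ is a finite composite of basic seeds.
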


Theorem~\ref{thm:syntaxstrat} yields the following coinductive
interpretation of processes:
\begin{center}
  ${\transl{\Gam \vdash \sum_i \alpha_i.P_i} = \with{B \mapsto
    \oplus_{\ens{i \aalt \transl{\alpha_i} = B}} \transl{\Gam \cdot \alpha_i \vdash P_i}} }$
  \\
  ${\transl{\Gam \vdash P \para Q} = \left \langle
    {\begin{array}[c]{rcl}
        \forklof{\Gam} & \mapsto & \transl{\Gam \vdash P} \\
        \forkrof{\Gam} & \mapsto & \transl{\Gam \vdash Q} \\
        - & \mapsto & \emptyset
      \end{array}} \right \rangle }$
\hfil
  ${\transl{\Gam \vdash \nu.P} = \left \langle
    {\begin{array}[c]{rcl}
        \nuof{\Gam} & \mapsto & \transl{\Gam+1 \vdash P} \\
        - & \mapsto & \emptyset
      \end{array}} \right \rangle}$,
\end{center}%

\noindent with $\transl{\send{a}{b}} = o_{\Gam,a,b}$, $\transl{a} =
\iota_{\Gam,a}$,  $\transl{\tick} = \tickof{\Gam}$,  %
 $\emptyset$ is the empty $\oplus$ sum, and  $- \mapsto \emptyset$ 
means that all unmentioned basic seeds are mapped to $\emptyset$.
\begin{example}\label{ex:transl}
  Omitting typing contexts, we have 
  $$\transl{\Gam \vdash a.P + a.Q + \send{b}{c}.R} = 
  \left \langle
          {\begin{array}[c]{rcl}
              \iota_{\Gam,a} & \mapsto & \transl{\Gam + 1 \vdash P} \oplus \transl{\Gam + 1 \vdash Q} \\
              o_{\Gam,b,c} & \mapsto & \transl{\Gam \vdash R} \\
            - & \mapsto & \emptyset
          \end{array}} \right \rangle.$$  
\end{example}
We now define a transition system for definite strategies, which is
useful for characterising fair testing equivalence, and for which we
need to define two auxiliary operations.  The first is an operation of
\emph{derivation} along a basic seed, defined from definite strategies
to strategies by $\deriv_B \langle (S_{B'})_{B' \in \MMMB_n} \rangle =
S_B$.  The second is a partial \emph{restriction} operation from
strategies to definite strategies, defined if $i \in p$ by
$\restr{(\oplus_{i' \in p} D_{i'})}{i} = D_i$.
\begin{example}
  Following up on Example~\ref{ex:transl} and omitting contexts, we have
  \begin{mathpar}
    \restr{(\deriv_{\transl{a}} \transl{a.P + a.Q +
        \send{b}{c}.R})}{2} = \transl{Q} \and
    \restr{(\deriv_{\transl{\send{b}{c}}} \transl{a.P + a.Q +
        \send{b}{c}.R})}{1} = \transl{R}.
  \end{mathpar}
\end{example}
These operations may be extended to arbitrary strategies and moves, in
a way which we will gloss over here. We may thus write
$\restr{(\deriv_M S)}{i}$. This yields:
\begin{definition}
  Let $\SSS_\D$ denote the free reflexive graph with as vertices pairs
  of a position $X$ and a definite strategy $D$ over $X$, and as edges
  all well-defined triples $(X,D) \xto{M} (Y, \restr{(\deriv_M
    D)}{i})$, for all moves $M \colon Y \proto X$.
\end{definition}
We view this graph as a transition system for strategies.
\begin{example}
  We have examples mirroring transitions in $\pi$. Calling $D$ the
  translation of the process of Example~\ref{ex:transl}, we have,
  e.g., $([\Gam], D) \xto{\iota_{\Gam,a}} ([\Gam+1], \transl{P})$, and the same
  with $Q$. But we also have transitions for things which usually go
  into structural equivalence, e.g., 
  $([\Gam],\transl{P \para Q}) \xto{\forkof{\Gam}} ([\Gam]\para[\Gam], (\transl{P},\transl{Q})).$
  In the final state, by the second part of
  Theorem~\ref{thm:syntaxstrat}, we define a strategy on
  $[\Gam]\para[\Gam]$ by providing two strategies on $[\Gam]$.
  Similarly, we have a transition 
  $([\Gam],\transl{\nu.P}) \xto{\nuof{\Gam}} ([\Gam+1],\transl{P})$.
\end{example}

\subsection{Fair testing equivalence from the transition system}
The point of the transition system $\SSS_\D$ is to characterise our
semantic analogue of fair testing equivalence. For lack of space, we
describe the characterisation, omitting the direct, game semantical
definition.  First, as announced in the introduction, we allow tests
to rename some channels.  Recall from Definition~\ref{def:interface}
the canonical interface $I_X$ of a position $X$.
\begin{definition}
  For any state $(X,D)$ of $\SSS_\D$, a \emph{test} for $(X,D)$ is a pair
  of a horizontal morphism $h \colon I_X \to Y$ and a strategy $T$ on
  $Y$. 
\end{definition}
The morphism $I_X \to Y$ may identify some channels and introduce new
ones.  Whether such a test is passed successfully will be determined
by the `closed-world' dynamics of the strategy $(D, T)$ over the
pushout $Z = X +_{I_X} Y$. Intuitively, the players of $Z$ are
partitioned into players from $X$ and players from $Y$, so $(D,T)$ is,
by a slight abuse of language, a strategy for the whole.

Let now $\SSS_\D^\W$, the \emph{closed-world} part of $\SSS_\D$, be the
identity-on-vertices subgraph of $\SSS_\D$ consisting of edges whose
underlying moves have the shape $\taunamcd$, $\nun$, $\tickn$, or
$\forkn$. There is an obvious morphism of reflexive graphs $\labelD
\colon \SSS_\D^\W \to \Sierp$ to the one-vertex reflexive graph with
one non-identity edge $\tick$. We denote by $(X,D) \xTo{} (X',D')$ the
existence of a path in $\SSS_\D^\W$ mapped by $\labelD$ to a path of
identities in $\Sierp$, and by $(X,D) \xTo{\tick} (X',D')$ the
existence of a path mapped to a path consisting of identities and
exactly one $\tick$ edge.
\begin{definition}
  Let $\bot^{\D}$ denote the set of all vertices $x$ of $\SSS_\D$ such that,
  for all $x \xTo{} x'$, there exists $x''$ such that $x' \xTo{\tick}
  x''$.  Let $(X,D)^\bot$ denote the set of all tests $(h,T)$ such
  that $(D,T) \in \bot^{\D}$. Finally, let $(X,D) \faireqof{\D}{} (X',D')$ iff
  $(X,D)^\bot = (X',D')^\bot$.
\end{definition}

\subsection{Main results}
In this section, we at last state our main results.  First, let us
define our variant of fair testing equivalence for $\pi$.  Let a
\emph{test} for $\Gam \vdash P$ consist of a pair of a map $h \colon
\Gam \to \Del$ and a process $\Del \vdash R$.  Let $\picalc^\W$ denote
the identity-on-vertices sub-reflexive graph of $\picalc$ consisting
of $\tau$ and $\tick$ transitions.  There is an obvious morphism $\labelpi
\colon \picalc^\W \to \Sierp$ and, mimicking previous notation, we put:
\begin{definition}
  Let $\bot^\picalc$ denote the set of all vertices $x$ of $\picalc$
  such that, for all $x \xTo{} x'$, there exists $x''$ such that $x'
  \xTo{\tick} x''$.  Let $(\Gam \vdash P)^\bot$ denote the set of all
  tests $(h,R)$ such that $(P[h] \para R) \in \bot^\picalc$. Finally,
  let $(\Gam \vdash P) \faireqof{\picalc}{} (\Gam \vdash Q)$ iff $(\Gam \vdash
  P)^\bot = (\Gam \vdash Q)^\bot$.
\end{definition}
\begin{theorem}\label{thm:1}
  For all $P,Q$,  $(\Gam \vdash P)
  \faireqof{\picalc}{} (\Gam \vdash Q)$ iff $([\Gam], \transl{P})
  \faireqof{\D}{} ([\Gam], \transl{Q})$.
\end{theorem}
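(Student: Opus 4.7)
The plan is to reduce Theorem~\ref{thm:1} to a correspondence between the closed-world transition systems on each side, via the translation $\translfun$. Both $\faireqof{\picalc}{}$ and $\faireqof{\D}{}$ are defined by the same ``testing plus tick-reachability'' pattern, so it suffices to establish (i) a correspondence between $\pi$-tests and those $\D$-tests that can separate translations of processes, and (ii) a correspondence between the closed-world subgraphs $\picalc^\W$ and $\SSS_\D^\W$ (restricted to states reachable from translations of combined process/test configurations) that preserves and reflects the relations $\xTo{}$ and $\xTo{\tick}$.

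For (ii), the key idea, as hinted in the discussion of the new \lts{}, is to factor $\translfun$ through an intermediate reflexive graph $\picalc^\star$ whose transitions are indexed by moves of $\D$: a single output action (collapsing the standard free/bound output split), together with explicit ``administrative'' transitions for $\para$, $\nu$, and $\tick$. I would first prove that $\picalc \to \picalc^\star$ preserves and reflects tick-reachability on the closed-world fragment, since the administrative moves neither produce nor consume $\tick$s and commute appropriately with synchronisation. Then, using Theorem~\ref{thm:syntaxstrat} to align the syntactic shape of strategies with that of processes, I would extend $\translfun$ to a morphism $\picalc^\star \to \SSS_\D$ and establish, by a coinductive argument on the definition of $\translfun$ together with a case analysis on basic seeds, that this morphism is a functional bisimulation between $\picalc^\star$-states and the $\SSS_\D$-states of the form $(Z, \transl{P} \para T)$ appearing in test configurations.

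For (i), one direction is straightforward: every $\pi$-test $(h, R)$ gives a $\D$-test $(h, \transl{R})$, and the combined configuration $(Z, (\transl{P[h]}, \transl{R}))$ matches $\transl{(P[h] \para R)}$ up to administrative $\para$- and $\nu$-moves, which are tick-invisible by (ii). Hence non-separation in $\D$ entails non-separation in $\pi$. The converse is the main obstacle: one must show that an arbitrary strategy $T$ used as a $\D$-test cannot separate $\transl{P}$ and $\transl{Q}$ unless some $\pi$-process test already does. The cleanest route is a definability argument showing that every strategy over a position $Y$ arises, up to $\faireqof{\D}{}$, as $\transl{R}$ for some (possibly infinite) $\pi$-term $R$. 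The syntactic characterisation of strategies in Theorem~\ref{thm:syntaxstrat} makes this plausible: basic seeds are in bijection with the constructors of the $\pi$-syntax (input, output, $\tick$, $\para$, $\nu$), and the $\oplus$ sum mirrors the $\sum_i \alpha_i.P_i$ prefix-sum, so one reads off $R$ from the inductive structure of $T$.

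The hardest part will be the converse direction of (i): making sure that the tree-unfolding of a strategy into a process behaves correctly \emph{globally}, \ie{} that it reproduces the same set of tick-paths when composed with $\transl{P}$ rather than merely the same views in isolation. The innocence condition on strategies, plus the fact that our processes permit arbitrary $\Nat$-indexed sums, should suffice. Once (i) and (ii) are in place, Theorem~\ref{thm:1} follows by unfolding the definitions of $\bot^\picalc$ and $\bot^{\D}$.
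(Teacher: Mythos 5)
Your overall architecture --- an operational correspondence between the closed-world transition systems, plus a definability argument ensuring that arbitrary strategy tests can be replaced by process tests --- matches the paper's, and your easy direction of (i) is fine. Where you diverge is in how the operational correspondence is organised, and here the paper's central device is missing from your plan: an intermediate alphabet $\A$ (a reflexive graph whose vertices are finite maps $\Del \to \Gam$, with input, output, $\nu$, $\tick$ and partial-synchronisation labels) over which \emph{both} $\picalc$ and $\SSS_\D$ are viewed as \ltss{}, together with a notion of \emph{complementary} $\A$-transitions identifying when two labelled transitions are the restrictions of a single closed-world move. This lets one trade reasoning about combined states $P \para T$ for reasoning about $P$ and $T$ separately, and reduces the whole theorem to one statement: $\translfun$ is surjective up to weak bisimilarity over $\A$ (except for the empty strategy, which is nonetheless fair-testing equivalent to $\transl{\tick}$). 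Your proposal instead works with the combined closed-world systems directly, via a bisimulation through an intermediate $\picalc^\star$. That is not wrong in principle, but it runs head-on into what the paper calls the non-modularity of $\SSS_\D$: a closed-world synchronisation in a state $(Z,(\transl{P[h]},T))$ cannot be analysed without decomposing it into the contributions of the two components, which is exactly what $\A$ and complementarity formalise. Moreover, your definability claim is stated ``up to $\faireqof{\D}{}$'', but proving that a read-off process is fair-testing equivalent to a given strategy itself quantifies over all co-tests, so without a labelled characterisation that is a congruence for closed-world composition you have no handle on it; this is precisely where the $\A$-\lts{} earns its keep.

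The second, more concrete gap is in the definability step itself. You assert that basic seeds are in bijection with the constructors of the $\pi$-syntax and that one can therefore read a process off the inductive structure of a strategy. This is not quite true: a single definite strategy $\langle (S_B)_{B \in \MMMB_n}\rangle$ may assign nonempty continuations simultaneously to an input seed $\iota_{n,a}$, to the seed $\nu_n$, and to the fork seeds, whereas a $\pi$ guarded sum $\sum_i \alpha_i.P_i$ only admits prefixes $\alpha ::= \send{a}{b} \aalt a \aalt \tick$ --- neither $\nu$ nor $\para$ may occur as a summand. The naive read-off thus produces terms such as ``$\nu.P + a.Q$'' that are not in the grammar. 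The paper singles this out as the one real subtlety of the surjectivity proof and resolves it by an encoding (informally, $\nu c.(\bar c.\nu.P + \ldots)$) whose correctness is only up to weak bisimilarity over $\A$. Your appeal to innocence and $\Nat$-indexed sums does not address this mismatch, so as written the converse of your step (i) does not go through.
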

\paragraph{Proof sketch.} 
The main difficulty is that we have to compare \ltss{} over very
different alphabets. A first point is that edges in $\SSS_\D$ are very
intensional. E.g., an input transition describes not only the involved
channels but also which \emph{player} makes the move.  A second point
is that $\SSS_\D$ is not `modular', in the sense that it is not
obvious to infer the transitions of a vertex $(X,D)$ from the
transitions of players of $X$.  E.g., we have transitions $\transl{\nu
  a.a(x)} \xto{\nu_0} \transl{a(x)} \xto{\iota_{1,1}} 0$.  

We rectify the latter deficiency first, by designing a finer \lts{}
$\SSS^\LLL_\D$ for $\D$.  Its vertices are triples $(I,h,S)$ of an
interface $I$, a horizontal map $h \colon I \to X$, and a strategy $S$
over $X$.  $I$ represents all channels known to the environment, and
the idea is that all transitions in $\SSS^\LLL_\D$ may be completed
into closed-world transitions by interacting at $I$. This corrects
the second mentioned deficiency, but $\SSS^\LLL_\D$ remains too
intensional. %
\begin{figure}[t]  
  \centering  
  \begin{mathpar} 
  \inferrule{ }{%
    (\Del \xto{h} \Gam)  %
    \xot{\tick} %
    (\Del \xto{h} \Gam)  %
  } 
  \and 
  \inferrule{ }{ 
    (\Del \xto{h} \Gam)   
    \xot{\nu}  
    (\Del \xto{h} \Gam \xinto{\subseteq} \Gam+1) %
  } %
  \and 
  \inferrule{a \in \im(h) }{ 
    (\Del \xto{h} \Gam)  \xot{\iota^{}(a)} %
    (\Del + 1 \xto{h+!} \Gam+1)  
  } 
  \and 
  \inferrule{ a \in \im(h)}{ 
    (\Del \xto{h} \Gam)  \xot{o^{}(a,b)}  
    (\Del + 1 \xto{[h,b]} \Gam) %
  } 
  \and 
  \inferrule{a,c \in \im(h); a \neq c }{%
    (\Del \xto{h} \Gam)  %
    \xot{%
      o(a,b) \paradr \iota(c)} %
    (\Del \xto{h} \Gam \xto{\subseteq} \Gam + 1) %
}%
\and 
  \inferrule{ 
  }{%
    (\Del \xto{h} \Gam) \xot{\delta}  
    (\Del \xto{h} \Gam) 
  }~\cdot 
\end{mathpar} 
  \caption{Edges for $\A$}  
  \label{fig:A}  
\end{figure} %
So, we coarsen the \lts{} $\SSS^\LLL_\D$ to a new \lts{} $\SSS^\A_\D$.
The new labels are given by the free reflexive graph $\A$
with as vertices all maps $\Del \to \Gam$ of finite sets, 
and as with edges as defined by the rules in Fig.~\ref{fig:A}\footnote{In Fig.~\ref{fig:A}, we  
  put side conditions as premises for conciseness.}.    

The idea, for vertices, is that $\Del$ represents the channels of the
interface, and $\Gam$ represents the channels that the considered
process or strategy (say, an agent) knows locally.  The first rule
should be easy. The second rule says that an agent may create a
private channel, \emph{a priori} unknown to the environment. The next
two rules, for input and output, have been simplified for clarity. The
important point is their symmetry: both add one channel to the
interface. The input rule, however, locally considers the new channel
as fresh, whereas the output rule records that it is the sent
channel. By the side condition, the channel on which the
synchronisation occurs should belong to $\Del$.  The rules for input
and output describe one way of decomposing a synchronisation.  The
last two rules describe another way, where an input on $a$ and an
output on $c$ both occur for the same agent, which cannot verify
locally that $a = c$.  Again, we only present a particular case of our
real rules (actually this is just the case $b \notin \im(h)$). These
rules are reminiscent of Rathke and Soboci\'{n}ski~\cite{modularLTS}
(for input/output), and Crafa et
al.~\cite{DBLP:conf/fossacs/CrafaVY12} (for partial synchronisations).

We have already mentioned that $\SSS_\D$ may be viewed as \anlts{}
$\SSS^\A_\D$ over $\A$.  It is not too much work to also view
$\picalc^\A$ as \anlts{} over $\A$.  Next, we define when two
transitions in $\A$ are complementary, i.e., are the restrictions of a
closed-world transition. This gives the right notion of
complementarity for both $\picalc^\A$ and $\SSS^\A_\D$, so that fair
testing equivalence in $\picalc$ and $\SSS_\D$ may be checked in terms
of transitions over $\A$. Thus, in order to check whether an agent $P$
passes a test $T$, e.g., instead of considering transition sequences
$P \para T \xTo{} Q$, one may consider complementary sequences $P
\xTo{w}_\A P'$ and $T \xTo{w'}_\A T'$ such that $Q = P' \para T'$.

Thanks to this, one reduces to proving that the translation
$\translfun \colon \picalc \to \SSS_\D$ is surjective up to weak
bisimilarity (except for empty strategies), which ensures that there
are enough tests in $\picalc$. For this, the only subtlety is that in
$\picalc$, $\nu$ is a standalone construct, which may not be part of a
guarded sum, while in $\SSS_\D$ it is treated exactly as inputs,
outputs, and ticks. This is dealt with by encoding any guarded sum
$\nu.P + \ldots$ as, informally, $\nu c.(\bar{c}.\nu.P + \ldots)$.
\qed

In the course of our proof, we have shown that almost all
strategies are weakly bisimilar, hence fair testing equivalent, to
some $\transl{P}$. Actually, the only strategy which is not is
$\emptyset$, which is in fact fair testing equivalent to
$\transl{\tick}$!  This entails
\begin{theorem}\label{thm:2}
  For all strategies $S$ over $[\Gam]$, there exists a
  process $\Gam \vdash P$ such that $\transl{\Gam \vdash P}
  \faireqof{\D}{} ([\Gam],S)$.
\end{theorem}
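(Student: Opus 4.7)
The plan is to lift the surjectivity statement already established in the proof of Theorem~\ref{thm:1}. That argument showed that the translation $\translfun \colon \picalc \to \SSS_\D$ is surjective up to weak bisimilarity \emph{except} for the single strategy $\emptyset$. Since weak bisimilarity in $\SSS_\D$ refines fair testing equivalence (if two strategies simulate each other, then their closed-world compositions with any test $T$ simulate each other with respect to the $\tick$ label, so they agree on membership in $\bot^{\D}$), this immediately yields, for every strategy $S$ over $[\Gamma]$ that is not $\emptyset$, a process $\Gamma \vdash P$ such that $\transl{\Gamma \vdash P} \faireqof{\D}{} ([\Gamma], S)$.

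It then remains to handle the case $S = \emptyset$, for which, following the hint in the excerpt, we take $\Gamma \vdash P$ to be $\tick.0$ (viewed as the singleton guarded sum with $\alpha_1 = \tick$, $P_1 = 0$). To verify that $([\Gamma], \emptyset) \faireqof{\D}{} \transl{\Gamma \vdash \tick.0}$, fix any test $(h,T)$ for $([\Gamma], \emptyset)$. In the closed-world dynamics of $(\emptyset, T)$, the $\emptyset$ component cannot move, so every reachable state has the shape $(\emptyset, T')$ with $T'$ reachable from $T$; hence $(\emptyset, T) \in \bot^{\D}$ iff every such $T'$ can eventually enable a $\tick$ transition. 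In the dynamics of $(\transl{\tick.0}, T)$, every reachable state is either of the form $(\transl{\tick.0}, T')$, from which the left component can always immediately fire a $\tick$, or of the form $(\emptyset, T')$ obtained after that initial $\tick$; the first kind trivially satisfies the fair-testing condition, and the second reduces exactly to the condition of the previous case. Hence $(\emptyset)^{\bot} = (\transl{\tick.0})^{\bot}$.

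The main obstacle is making the refinement \emph{weak bisimilarity implies $\faireqof{\D}{}$} fully precise: one must show that bisimilarity is a congruence for closed-world composition of strategies along an interface $I$, and that the class $\bot^{\D}$, defined through the label morphism $\labelD$, is invariant under bisimulation. Both properties follow from the modularity of closed-world moves in $\SSS_\D^\W$ (which are built from independent moves of the players of a composite position), but must be checked uniformly in the choice of $h$, $T$, and ambient codomain $Y$. Once these congruence properties are available, the two cases above dispatch the theorem directly.
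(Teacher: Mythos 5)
Your decomposition is exactly the paper's: Theorem~\ref{thm:2} is derived from the surjectivity of $\translfun$ up to weak bisimilarity established during the proof of Theorem~\ref{thm:1}, the fact that weak bisimilarity refines $\faireqof{\D}{}$ (which the paper, like you, asserts rather than proves in detail), and a separate treatment of the single exceptional strategy $\emptyset$.

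The gap is in your verification that $([\Gam],\emptyset) \faireqof{\D}{} \transl{\Gam \vdash \tick.0}$: you treat $\emptyset$ as an \emph{inert} strategy, i.e.\ a state that is present in the composite $(\emptyset,T)$ but never moves, so that $(\emptyset,T)\in\bot^{\D}$ iff $T$ on its own always retains the ability to tick. Under that reading the claimed equivalence is \emph{false}: against the inert test $T=\transl{0}$, the composite $(\emptyset,\transl{0})$ could never perform $\tick$ and would fail, whereas $(\transl{\tick.0},\transl{0})$ passes, since the left player can always fire its $\tick$. The inert strategy is $\transl{0}=\langle - \mapsto \emptyset\rangle$, a \emph{definite} strategy already in the image of $\translfun$ and needing no special treatment; the exceptional strategy $\emptyset$ is the empty $\oplus$-sum, which has \emph{no} initial state. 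Hence the composite $(\emptyset,T)$ contributes no definite component, i.e.\ no vertex of $\SSS_\D$, and membership of $(h,T)$ in $([\Gam],\emptyset)^{\bot}$ holds vacuously for every test. That is the actual reason for the equivalence: both $\emptyset$ and $\transl{\tick.0}$ pass \emph{all} tests --- the latter because every silent descendant of $(\transl{\tick.0},T)$ still has $\transl{\tick.0}$ as its left component (that player accepts no fork, $\nu$, input or output seed, so it cannot take part in any identity-labelled closed-world move) and can therefore tick immediately. Relatedly, your clause ``of the form $(\emptyset,T')$ obtained after that initial $\tick$ \ldots{} reduces exactly to the condition of the previous case'' is a non sequitur: the quantification defining $\bot^{\D}$ ranges only over silent paths $x \xTo{} x'$, so states reachable only through a $\tick$ edge impose no condition at all (and after the tick the left component is $\transl{0}$, not $\emptyset$). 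The skeleton of your argument is the paper's, but the exceptional case must be redone with the correct semantics of the empty strategy.
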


Theorems~\ref{thm:1} and~\ref{thm:2} together are the desired full abstraction result.

\bibliography{../common/bib}

\ifx\abbrevbib\undefined \def\abbrev#1#2{#1} \else \def\abbrev#1#2{#2} \fi
\begin{thebibliography}{10}

\bibitem{DBLP:conf/lics/AbramskyM99}
S.~Abramsky and P.-A. Melli{\`e}s.
\newblock Concurrent games and full completeness.
\newblock In LICS 1999 \cite{DBLP:conf/lics/1999}, pages 431--442.

\bibitem{DBLP:conf/fossacs/BonsangueRS09}
M.~M. Bonsangue, J.~J. M.~M. Rutten, and A.~Silva.
\newblock A {K}leene theorem for polynomial coalgebras.
\newblock In {\em FoSSaCS}, volume 5504 of {\em LNCS}, pages 122--136.
  Springer, 2009.

\bibitem{Bousfield}
A.~K. Bousfield.
\newblock Constructions of factorization systems in categories.
\newblock {\em \abbrev{Journal of Pure and Applied Algebra}{JPAA}},
  9(2-3):287--329, 1977.

\bibitem{DBLP:conf/concur/BrinksmaRV95}
E.~Brinksma, A.~Rensink, and W.~Vogler.
\newblock Fair testing.
\newblock In {\em CONCUR}, volume 962 of {\em LNCS}, pages 313--327. Springer,
  1995.

\bibitem{DBLP:journals/corr/abs-0904-2340}
D.~Cacciagrano, F.~Corradini, and C.~Palamidessi.
\newblock Explicit fairness in testing semantics.
\newblock {\em \abbrev{Logical Methods in Computer Science}{LMCS}}, 5(2), 2009.

\bibitem{DBLP:conf/ctcs/CattaniSW97}
G.~L. Cattani, I.~Stark, and G.~Winskel.
\newblock Presheaf models for the pi-calculus.
\newblock In {\em Category Theory and Computer Science}, volume 1290 of {\em
  LNCS}, pages 106--126. Springer, 1997.

\bibitem{DBLP:conf/fossacs/CrafaVY12}
S.~Crafa, D.~Varacca, and N.~Yoshida.
\newblock Event structure semantics of parallel extrusion in the pi-calculus.
\newblock In {\em FoSSaCS}, volume 7213 of {\em LNCS}, pages 225--239.
  Springer, 2012.

\bibitem{DBLP:journals/tcs/NicolaH84}
R.~{De Nicola} and M.~Hennessy.
\newblock Testing equivalences for processes.
\newblock {\em \abbrev{Theoretical Computer Science}{TCS}}, 34:83--133, 1984.

\bibitem{DBLP:journals/tcs/DelormeMOT11}
M.~Delorme, J.~Mazoyer, N.~Ollinger, and G.~Theyssier.
\newblock Bulking {I}: An abstract theory of bulking.
\newblock {\em \abbrev{Theoretical Computer Science}{TCS}}, 412(30):3866--3880,
  2011.

\bibitem{Ehresmann:double2}
C.~Ehresmann.
\newblock {\em Cat\'egories et structures}.
\newblock Dunod, 1965.

\bibitem{Engberg86acalculus}
U.~Engberg and M.~Nielsen.
\newblock A calculus of communicating systems with label passing.
\newblock Technical Report PB-208, Aarhus University, 1986.

\bibitem{DBLP:conf/lics/FioreMS96}
M.~P. Fiore, E.~Moggi, and D.~Sangiorgi.
\newblock A fully-abstract model for the pi-calculus (extended abstract).
\newblock In LICS 1996 \cite{DBLP:conf/lics/1996}, pages 43--54.

\bibitem{DBLP:conf/lics/FioreS06}
M.~P. Fiore and S.~Staton.
\newblock A congruence rule format for name-passing process calculi from
  mathematical structural operational semantics.
\newblock In {\em LICS}, pages 49--58. \abbrev{IEEE Computer Society}{IEEE},
  2006.

\bibitem{DBLP:conf/lics/FioreT01}
M.~P. Fiore and D.~Turi.
\newblock Semantics of name and value passing.
\newblock In {\em LICS}, pages 93--104. \abbrev{IEEE Computer Society}{IEEE},
  2001.

\bibitem{FK}
P.~Freyd and G.~Kelly.
\newblock Categories of continuous functors, {I}.
\newblock {\em \abbrev{Journal of Pure and Applied Algebra}{JPAA}}, 2:169--191,
  1972.

\bibitem{GarnerPhD}
R.~Garner.
\newblock {\em Polycategories}.
\newblock PhD thesis, University of Cambridge, 2006.

\bibitem{DBLP:conf/fossacs/GhicaM04}
D.~R. Ghica and A.~S. Murawski.
\newblock Angelic semantics of fine-grained concurrency.
\newblock In {\em FoSSaCS}, volume 2987 of {\em LNCS}, pages 211--225.
  Springer, 2004.

\bibitem{DBLP:journals/mscs/Girard01}
J.-Y. Girard.
\newblock Locus solum: From the rules of logic to the logic of rules.
\newblock {\em \abbrev{Mathematical Structures in Computer Science}{MSCS}},
  11(3):301--506, 2001.

\bibitem{GrandisPare}
M.~Grandis and R.~Par{\'e}.
\newblock Limits in double categories.
\newblock {\em Cahiers de Topologie et G\'eom\'etrie Diff\'erentielle
  Cat\'egoriques}, 40(3):162--220, 1999.

\bibitem{DBLP:conf/lics/HarmerHM07}
R.~Harmer, M.~Hyland, and P.-A. Melli{\`e}s.
\newblock Categorical combinatorics for innocent strategies.
\newblock In {\em LICS}, pages 379--388. \abbrev{IEEE Computer Society}{IEEE},
  2007.

\bibitem{DBLP:conf/lics/HarmerM99}
R.~Harmer and G.~McCusker.
\newblock A fully abstract game semantics for finite nondeterminism.
\newblock In LICS 1999 \cite{DBLP:conf/lics/1999}, pages 422--430.

\bibitem{DBLP:journals/tcs/Hennessy02}
M.~Hennessy.
\newblock A fully abstract denotational semantics for the pi-calculus.
\newblock {\em \abbrev{Theoretical Computer Science}{TCS}}, 278(1-2):53--89,
  2002.

\bibitem{DBLP:journals/tcs/Hildebrandt03}
T.~T. Hildebrandt.
\newblock Towards categorical models for fairness: fully abstract presheaf
  semantics of {SCCS} with finite delay.
\newblock {\em \abbrev{Theoretical Computer Science}{TCS}}, 294(1/2):151--181,
  2003.

\bibitem{DBLP:conf/calco/Hirschowitz13}
T.~Hirschowitz.
\newblock Full abstraction for fair testing in {CCS}.
\newblock In {\em CALCO}, volume 8089 of {\em LNCS}, pages 175--190. Springer,
  2013.
\newblock Long version submitted.

\bibitem{2011arXiv1109.4356H}
T.~Hirschowitz and D.~Pous.
\newblock Innocent strategies as presheaves and interactive equivalences for
  {CCS}.
\newblock {\em Scientific Annals of Computer Science}, 22(1):147--199, 2012.
\newblock Selected papers from ICE '11.

\bibitem{DBLP:journals/iandc/HylandO00}
J.~M.~E. Hyland and C.-H.~L. Ong.
\newblock On full abstraction for {PCF}: {I}, {II}, and {III}.
\newblock {\em Inf. Comput.}, 163(2):285--408, 2000.

\bibitem{Jacobs}
B.~Jacobs.
\newblock {\em Categorical Logic and Type Theory}.
\newblock Number 141 in Studies in Logic and the Foundations of Mathematics.
  North Holland, Amsterdam, 1999.

\bibitem{Joyal:ncatlab:facto}
A.~Joyal.
\newblock Factorisation systems.
\newblock \url{http://ncatlab.org/joyalscatlab}.

\bibitem{DBLP:conf/fsttcs/Laird06}
J.~Laird.
\newblock Game semantics for higher-order concurrency.
\newblock In {\em FSTTCS}, volume 4337 of {\em LNCS}, pages 417--428. Springer,
  2006.

\bibitem{LeinsterHC}
T.~Leinster.
\newblock {\em Higher Operads, Higher Categories}, volume 298 of {\em London
  Mathematical Society Lecture Notes}.
\newblock Cambridge University Press, Cambridge, 2004.

\bibitem{DBLP:conf/lics/1996}
{\em 11th Symposium on Logic in Computer Science}. \abbrev{IEEE Computer
  Society}{IEEE}, 1996.

\bibitem{DBLP:conf/lics/1999}
{\em 14th Symposium on Logic in Computer Science}. \abbrev{IEEE Computer
  Society}{IEEE}, 1999.

\bibitem{MM}
S.~MacLane and I.~Moerdijk.
\newblock {\em Sheaves in Geometry and Logic: A First Introduction to Topos
  Theory}.
\newblock Universitext. Springer, 1992.

\bibitem{DBLP:conf/lics/Mellies12}
P.-A. Melli{\`e}s.
\newblock Game semantics in string diagrams.
\newblock In {\em LICS}, pages 481--490. IEEE, 2012.

\bibitem{DBLP:conf/concur/MelliesM07}
P.-A. Melli{\`e}s and S.~Mimram.
\newblock Asynchronous games: Innocence without alternation.
\newblock In {\em CONCUR}, volume 4703 of {\em LNCS}, pages 395--411. Springer,
  2007.

\bibitem{Milner89}
R.~Milner.
\newblock {\em Communication and Concurrency}.
\newblock Prentice-Hall, 1989.

\bibitem{Milner:pi}
R.~Milner, J.~Parrow, and D.~Walker.
\newblock A calculus of mobile processes, {I/II}.
\newblock {\em Information and Computation}, 100(1):1--77, 1992.

\bibitem{DBLP:journals/entcs/MontanariP95}
U.~Montanari and M.~Pistore.
\newblock Concurrent semantics for the pi-calculus.
\newblock {\em \abbrev{Electronic Notes in Theoretical Computer
  Science}{ENTCS}}, 1:411--429, 1995.

\bibitem{DBLP:conf/icalp/NatarajanC95}
V.~Natarajan and R.~Cleaveland.
\newblock Divergence and fair testing.
\newblock In {\em ICALP}, volume 944 of {\em LNCS}, pages 648--659. Springer,
  1995.

\bibitem{DBLP:conf/lfcs/Nickau94}
H.~Nickau.
\newblock Hereditarily sequential functionals.
\newblock In {\em LFCS}, volume 813 of {\em LNCS}, pages 253--264. Springer,
  1994.

\bibitem{PareYoneda}
R.~Par{\'e}.
\newblock Yoneda theory for double categories.
\newblock {\em Theory and Applications of Categories}, 25(17):436--489, 2011.

\bibitem{Popescu09}
A.~Popescu.
\newblock A fully abstract coalgebraic semantics for the pi-calculus under weak
  bisimilarity.
\newblock Technical Report UIUCDCS-R-2009-3045, University of Illinois, 2009.

\bibitem{modularLTS}
J.~Rathke and P.~Sobocinski.
\newblock Deconstructing behavioural theories of mobility.
\newblock In {\em IFIP TCS}, volume 273 of {\em IFIP}, pages 507--520.
  Springer, 2008.

\bibitem{RideauW}
S.~Rideau and G.~Winskel.
\newblock Concurrent strategies.
\newblock In {\em LICS~'11}. \abbrev{IEEE Computer Society}{IEEE}, 2011.

\bibitem{DBLP:journals/acta/Sangiorgi96}
D.~Sangiorgi.
\newblock A theory of bisimulation for the pi-calculus.
\newblock {\em Acta Informatica}, 33(1):69--97, 1996.

\bibitem{DBLP:conf/lics/Stark96}
I.~Stark.
\newblock A fully abstract domain model for the pi-calculus.
\newblock In LICS 1996 \cite{DBLP:conf/lics/1996}, pages 36--42.

\bibitem{DBLP:conf/fossacs/Winskel13}
G.~Winskel.
\newblock Strategies as profunctors.
\newblock In {\em FoSSaCS}, volume 7794 of {\em LNCS}, pages 418--433.
  Springer, 2013.

\end{thebibliography}
\bibliographystyle{abbrv}

\end{document}